\documentclass{article}
\usepackage{format}
\usepackage[linesnumbered,ruled,vlined]{algorithm2e}

\def\SOdd{S_{\textup{odd}}}
\def\SEven{S_{\textup{even}}}
\def\muOdd{\mu_{\textup{odd}}}
\def\muEven{\mu_{\textup{even}}}

\title{A tight lower bound on non-adaptive group testing estimation}

\author{
Nader H. Bshouty\thanks{Department of Computer Science, Technion. Email: \texttt{bshouty@cs.technion.ac.il}}
\and
Tsun-Ming Cheung\thanks{School of Computer Science, McGill University. Email: \texttt{tsun.ming.cheung@mail.mcgill.ca}}
\and
Gergely Harcos\thanks{Number Theory Division, Alfr\'ed R\'enyi Institute of Mathematics. Supported by the R\'enyi Int\'ezet Lend\"ulet Automorphic Research Group and NKFIH (National Research, Development and Innovation Office) grant K~143876. Email \texttt{harcos.gergely@renyi.hu}}
\and
Hamed Hatami\thanks{School of Computer Science, McGill University. Research supported by an NSERC grant. Email: \texttt{hatami@cs.mcgill.ca}} 
\and
Anthony Ostuni\thanks{Department of Computer Science and Engineering, UC San Diego. Supported by NSF award 2006443. Email: \texttt{aostuni@ucsd.edu}}
}
\date{}

\begin{document}

\maketitle

\begin{abstract}
    Efficiently counting or detecting defective items is a crucial task in various fields ranging from biological testing to quality control to streaming algorithms. The \emph{group testing estimation problem} concerns estimating the number of defective elements $d$ in a collection of $n$ total within a given factor. We primarily consider the classical query model, in which a query reveals whether the selected group of elements contains a defective one. We show that any non-adaptive randomized algorithm that estimates the value of $d$ within a constant factor requires $\Omega(\log n)$ queries. This confirms that a known $O(\log n)$ upper bound by Bshouty (2019) is tight and resolves a conjecture by Damaschke and Sheikh Muhammad (2010). Additionally, we prove similar matching upper and lower bounds in the threshold query model.
\end{abstract}

\section{Introduction}
The \emph{group testing} problem is a fundamental computational problem that concerns counting or detecting a set of defective items. Suppose there are $d$ defective items in a collection of $n$ total. Under the group testing model, a query can determine whether a selected group of elements contains a defective one. The concept was pioneered by Dorfman \cite{dorfman1943detection} to obtain more efficient syphilis testing by mixing blood samples together to perform tests in groups, and since then has exploded into a richly studied area. There are now various models considered \cite{damaschke2010bounds, atia2012boolean, cheraghchi2013noise, aldridge2019group, bshouty2019lower, bshouty2020optimal}, each with different applications as diverse as DNA testing \cite{gille1991pooling, balding1996comparative, curnow1998pooling}, learning theory \cite{gilbert2008group, malioutov2013exact, emad2015semiquantitative, malioutov2017learning}, and industrial processes \cite{sobel1959group}. Readers may see \cite[Section 1.7]{aldridge2019group} for a more extensive discussion. 

In this work, we focus on \emph{estimating} the number of defective items $d$ in the randomized estimation setting. A randomized algorithm makes queries that may depend on randomness and outputs a valid answer with probability at least 2/3. For a given factor $\alpha = 1+\Omega(1)$, we say $d^*$ is an $\alpha$-estimation for $d$ if $d\leq d^*\leq \alpha d$.

An important distinction for query problems is the \emph{adaptiveness} of algorithms.
Adaptive algorithms may use the results of prior queries to choose the future ones. Nearly-tight bounds are known in the adaptive randomized estimation setting: Bshouty et al.\ \cite{bshouty2018adaptive} provided a randomized algorithm that returns a value $d^*$ with $(1-\eps)d \le d^* \le (1+\eps)d$, and makes at most $\log \log d + \log^*n + O\left(1/\eps^2\right)$ queries, where $\log^*$ is the iterated logarithm function.
They also gave a nearly matching query complexity lower bound of $\log \log d + \Omega\left(1/\eps\right)$.

Our work considers the \emph{non-adaptive} case, where all queries are chosen at once. Despite being less powerful, non-adaptive algorithms have the benefits of parallelized testing and simpler test designs, which are significant for efficient implementation in real-world applications. Bshouty provided a polynomial-time $O(\log n)$ query constant-estimation algorithm \cite{bshouty2019lower} using ideas similar to \cite{damaschke2010competitive, falahatgar2016estimating}, and it was conjectured to be tight \cite{damaschke2010bounds}. A close lower bound of $\Omega(\log n / \log \log n)$ was independently proven in \cite{ron2016power} and \cite{bshouty2019lower}. Very recently, Bshouty posted a manuscript that established the lower bound $\Omega(\log n / \log \log \stackrel{k}{\cdots} \log n)$ for any constant $k$ \cite{bshouty2023improved}; this is the same bound errantly claimed earlier \cite{bshouty2018lower}. In this work, we confirm that the conjectured $\Omega(\log n)$ bound is indeed tight.
\begin{theorem}\label{thm:main}
    Let $\alpha = 1+\Omega(1)$. Any non-adaptive randomized algorithm that $\alpha$-estimates the number of defective items in a set of $n$ total must perform $\Omega(\log_\alpha n)$ queries.
\end{theorem}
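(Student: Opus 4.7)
By Yao's minimax principle it suffices to exhibit a pair of distributions $\muOdd$, $\muEven$ on defective sets $\SOdd,\SEven \subseteq [n]$ with clearly separated values of $|S|$ such that no deterministic non-adaptive algorithm using $q = o(\log_\alpha n)$ fixed queries $Q_1, \dots, Q_q \subseteq [n]$ can distinguish $\muOdd$ from $\muEven$ with advantage better than $1/3$. Set $K = \lfloor c\log_\alpha n\rfloor$ for a small constant $c>0$ and take $K$ geometrically-spaced scales $p_i = \alpha^{-2i}$ for $i = 1,\dots,K$, rescaled so $p_K = \Theta(1/n)$. Let $\mu_i$ be the product Bernoulli measure on $\{0,1\}^n$ with parameter $p_i$; by a Chernoff bound, $|S| = (1\pm o(1))\,np_i$ with overwhelming probability under $\mu_i$. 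Define $\muOdd$ and $\muEven$ as the uniform mixtures of $\mu_i$ over odd and even $i$. Since consecutive scales differ by a factor of $\alpha^2$ in $|S|$, no single output can be a valid $\alpha$-estimation for both $\mu_i$ and $\mu_{i+1}$, so correctness on each $\mu_i$ forces the induced output distributions to be essentially disjoint across scales. Standard manipulations then give $\lVert D_{\mathrm{odd}} - D_{\mathrm{even}}\rVert_{\mathrm{TV}} \ge \Omega(1)$, where $D_\bullet$ denotes the distribution of the query-answer vector $Y\in\{0,1\}^q$ under input distribution $\bullet$.

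\paragraph{Per-query structure.}
For a query $Q_j$ of size $t_j$, the answer under $\mu_i$ is Bernoulli with parameter $\pi_i^{(j)} = 1-(1-p_i)^{t_j}\approx 1 - e^{-p_i t_j}$. This marginal lies in $(1/4,3/4)$ only for $i$ within a constant-length window centered at $i^\ast(t_j) := \tfrac12\log_\alpha(1/t_j)$; outside this window, $\pi_i^{(j)}$ is exponentially close to $0$ or $1$ and so essentially deterministic under both $\mu_i$ and $\mu_{i+1}$. Consequently each of the $q$ queries can ``usefully'' distinguish only a bounded number of adjacent pairs $(\mu_i, \mu_{i+1})$. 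Since there are $K/2$ consecutive odd–even pairs, for any $q < cK$ at least $K/2 - O(q)$ of these pairs have no query whose size falls in the critical window for their scale.

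\paragraph{Aggregation and main obstacle.}
Convexity of total variation under mixtures yields
\[
\lVert D_{\mathrm{odd}} - D_{\mathrm{even}}\rVert_{\mathrm{TV}} \;\le\; \frac{2}{K}\sum_{k=1}^{K/2}\lVert D_{2k-1} - D_{2k}\rVert_{\mathrm{TV}}.
\]
Bounding the ``assigned'' pairs trivially by $1$ and the ``unassigned'' pairs by some $\eps=o(1)$ would then give $\lVert D_{\mathrm{odd}} - D_{\mathrm{even}}\rVert_{\mathrm{TV}} = O(q/K) + \eps$, contradicting $\Omega(1)$ once $q \le cK$ for small enough $c$. The main technical obstacle is proving the required $o(1)$ bound on $\lVert D_i - D_{i+1}\rVert_{\mathrm{TV}}$ when no query of size $\Theta(1/p_i)$ is present: the coordinates $Y_1,\dots,Y_q$ are correlated through the shared underlying Bernoulli variables, so naive tensorization of total variation fails. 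My plan is to work instead with squared Hellinger (or chi-squared) divergence, computing the likelihood ratio $D_i(y)/D_{i+1}(y)$ via an inclusion–exclusion expansion keyed on the subsets of $[q]$ of queries returning $1$: since $D_i(y)$ factors through the product structure of $\mu_i$ on $\{0,1\}^n$, the divergence decomposes into contributions indexed by query sizes, and the contribution from $Q_j$ is controlled by $\min(p_i t_j, 1/(p_i t_j))$, which is $o(1)$ precisely when $t_j$ is far from $1/p_i$. Summing only a bounded number of ``near-critical'' queries per scale then yields the desired $\lVert D_i - D_{i+1}\rVert_{\mathrm{TV}} = o(1)$, and the argument closes.
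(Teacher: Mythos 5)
Your setup — multiplicatively interleaved scale distributions, a critical window per query size, convexity aggregation — matches the paper's approach in spirit, and you correctly flag the central difficulty: the coordinates of the answer vector are correlated through the shared random defect set, so total variation does not tensorize over queries. But the proposed workaround does not close this gap. In the inclusion--exclusion expansion, the density $D_i(y)$ is a signed sum of terms $(1-p_i)^{\left|\bigcup_{j\in T}Q_j\cup\bigcup_{j:y_j=0}Q_j\right|}$ over subsets $T$ of the ``yes'' queries; the exponents are \emph{union} sizes of overlapping query sets, not individual query sizes $t_j$, so the Hellinger or $\chi^2$ divergence does \emph{not} ``decompose into contributions indexed by query sizes'' as you assert, and the per-query control $\min(p_i t_j, 1/(p_i t_j))$ is not what comes out. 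Moreover, even setting the tensorization issue aside, the binary split into ``assigned'' pairs (bounded by $1$) and ``unassigned'' pairs (bounded by $o(1)$) is not the right accounting: an adversary can place all $q$ queries at sizes a factor $\log n$ away from critical for some pair $(i,i+1)$; that pair is then ``unassigned'' in your sense yet has per-query contribution $\Theta(1/\log n)$, summing to $\Theta(1)$ over $q=\Theta(\log n)$ queries. What is actually true is an \emph{averaged} statement: the double sum over pairs and queries is $O(q)$ because each query contributes $O(1)$ summed over all scales.

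The missing idea is the coupling inequality, and it resolves both problems at once. Couple $X\sim\muEven$ and $Y\sim\muOdd$ at the level of defect sets by first drawing a common scale index $j\in[m]$ uniformly and then drawing sets of the two adjacent sizes at that scale. Since $(\vec{Q}^{\geq\lambda}(X),\vec{Q}^{\geq\lambda}(Y))$ is then a coupling of the induced answer distributions, the coupling inequality and a union bound give
\[
\bigl\|\vec{Q}^{\geq\lambda}(\muEven)-\vec{Q}^{\geq\lambda}(\muOdd)\bigr\|_{\TV}\;\le\;\Pr\bigl(\vec{Q}^{\geq\lambda}(X)\neq\vec{Q}^{\geq\lambda}(Y)\bigr)\;\le\;\sum_{i=1}^{q}\Pr\bigl(Q_i^{\geq\lambda}(X)\neq Q_i^{\geq\lambda}(Y)\bigr),
\]
which reduces the whole analysis to \emph{single-query} disagreement probabilities, sidestepping correlations entirely. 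For a single query of size $k$, summing the disagreement probability over scales $j$ is a geometric sum dominated by a constant near the critical scale $m_*(k)$ (Markov's inequality below, a Chernoff lower tail above), giving $O(1/m)$ per query and $O(q/m)$ total — exactly the averaged bound your accounting implicitly needs. The paper also uses fixed-size uniform sets rather than product Bernoulli, which removes the concentration caveat at the smallest scales where $np_K=\Theta(1)$.
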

Note \Cref{thm:main} holds for any estimation factor $\alpha=1+\Omega(1)$, even in super-constant regimes. For example, it states that a $\log n$-estimation algorithm would require $\Omega(\log n/\log\log n)$ queries, illustrating that $\alpha$-estimation is roughly as difficult as constant estimation unless $\alpha$ is quasi-polynomial in $n$. It should be noted, however, that this lower bound already follows from the techniques of \cite{bshouty2019lower} and \cite{bshouty2023improved} when $\alpha = \Omega(\log n)$ and $\alpha = \Omega(\log \log \cdots \log n)$, respectively.

The main result follows from \cref{thm:threshold} as a special case under the \emph{threshold query} model in the \emph{promise} setting. Introduced in \cite{damaschke2006threshold} and well-studied since then \cite{ADL11,He2012,BCE20,CJZ23}, the threshold query model is defined in the absolute sense: for a fixed threshold parameter $\lambda\in [n]$, the query oracle returns 1 on a set of items $Q$ if and only if $Q$ contains at least $\lambda$ defective items. Unlike some prior works, we do not distinguish between constant and non-constant thresholds in our lower bound. In the threshold model framework, the classical query oracle is simply a 1-threshold.

Notice that the threshold query model cannot produce a good prediction when the number of defective items is below the threshold. Specifically for $\lambda\geq 3$, the $\lambda$-threshold query result on a collection with 1 or $\lambda-1$ defective items is identically 0 regardless of the query size. To exclude the vacuous cases, it is reasonable to assume that the collection contains at least $\lambda$ defective items. More generally, for non-negative integers $L$ and $U$ with $L<U$, the \emph{$[L,U]$-promise} states that the number of defective items $d$ satisfies $L\leq d\leq U$.

We show a logarithmic query complexity lower bound in terms of the promise gap for the promise estimation problem with threshold queries. The lower bound for the standard query model follows by setting $\lambda=1$, $L=1$, and $U=n$, which attains the claimed bound of $\Omega(\log_\alpha n)$.
\begin{theorem}\label{thm:threshold}
    Let $\alpha=1+\Omega(1)$ and $\lambda\in [n]$. Suppose $L$ and $U$ are integers that satisfy $\lambda\leq L<U\leq n$. Under the $[L,U]$-promise, any non-adaptive randomized $\alpha$-estimation algorithm with $\lambda$-threshold queries must perform $\Omega(\log_\alpha (U/L))$ queries.
\end{theorem}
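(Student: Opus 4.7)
The plan is Yao's minimax principle with a hard distribution over geometrically spaced defective-set sizes, combined with a monotone coupling that telescopes cleanly across the promise range. Let $\beta := \alpha^2$ and $k := \lfloor \log_\beta(U/L) \rfloor = \Theta(\log_\alpha(U/L))$, and set $d_i := \lceil L \beta^i \rceil$ for $i = 0, 1, \ldots, k$ (adjusting $\beta$ upward to a larger constant if needed so that the $d_i$ are distinct integers in $[L, U]$ with $d_{i+1}/d_i > \alpha$). Then $\lambda \leq L = d_0 < \cdots < d_k \leq U$ and the valid $\alpha$-estimation intervals $[d_i, \alpha d_i]$ are pairwise disjoint. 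The hard distribution $\mathcal{H}$ draws $I$ uniformly from $\{0, \ldots, k\}$ and then $D$ uniformly from $\binom{[n]}{d_I}$; by Yao, it suffices to show that no deterministic non-adaptive $t$-query algorithm succeeds with probability $\geq 2/3$ on $\mathcal{H}$ unless $t = \Omega(k)$.

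Fix such an algorithm with queries $Q_1, \ldots, Q_t$, let $\mu_i$ denote the transcript distribution on $\{0,1\}^t$ under $I = i$, and let $p_j(i) := \Pr[|Q_j \cap D| \geq \lambda \mid I = i]$. I would couple $D^{(i)} \subset D^{(i+1)}$ by first sampling $D^{(i+1)}$ uniformly of size $d_{i+1}$ and then a uniform size-$d_i$ subset $D^{(i)} \subseteq D^{(i+1)}$; a direct computation verifies that $D^{(i)}$ is marginally uniform on $\binom{[n]}{d_i}$. Since each $\lambda$-threshold output $X_j$ is a monotone function of $D$, the coupled transcripts satisfy $X_j^{(i)} \leq X_j^{(i+1)}$ for every $j$, so a union bound yields
\[
\mathrm{TV}(\mu_i, \mu_{i+1}) \leq \Pr\bigl[\exists\, j : X_j^{(i)} \neq X_j^{(i+1)}\bigr] \leq \sum_{j=1}^t \bigl(p_j(i+1) - p_j(i)\bigr).
\]
Summing over $i$ and swapping orders telescopes the right-hand side:
\[
\sum_{i=0}^{k-1} \mathrm{TV}(\mu_i, \mu_{i+1}) \leq \sum_{j=1}^t \bigl(p_j(k) - p_j(0)\bigr) \leq t.
\]
Letting $V_i := \{T \in \{0,1\}^t : \hat d(T) \in [d_i, \alpha d_i]\}$, disjointness of the valid-output intervals gives $\mu_i(V_i) + \mu_{i+1}(V_{i+1}) \leq 1 + \mathrm{TV}(\mu_i, \mu_{i+1})$, hence $\Pr[\mathrm{err} \mid i] + \Pr[\mathrm{err} \mid i+1] \geq 1 - \mathrm{TV}(\mu_i, \mu_{i+1})$. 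Summing across $i = 0, \ldots, k-1$ and invoking the telescoping bound gives $2\sum_i \Pr[\mathrm{err} \mid i] \geq k - t$, which combined with the success condition $\sum_i \Pr[\mathrm{err} \mid i] \leq (k+1)/3$ forces $t = \Omega(k) = \Omega(\log_\alpha(U/L))$.

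The step I expect to carry the proof is the joint total variation bound: the $t$ queries share the random defective set $D$, so the naive subadditive bound $\mathrm{TV}(\mu_i, \mu_{i+1}) \leq \sum_j |p_j(i+1) - p_j(i)|$ does not hold for arbitrarily correlated coordinates, and techniques based on Fano's inequality or data processing applied to all of $D$ lose the key $\log$ factor (they give only $\Omega(\log k) = \Omega(\log \log(U/L))$ or weaker). The essential observation is that $\lambda$-threshold queries are monotone set functions of $D$, which permits the superset coupling $D^{(i)} \subset D^{(i+1)}$ and collapses the joint TV bound into a union bound over single-query ``sensitivity'' gaps; the telescoping identity then captures the finite budget of $t$ bits of monotone Boolean information across the entire promise range. \Cref{thm:main} follows by specializing to $\lambda = 1$, $L = 1$, and $U = n$.
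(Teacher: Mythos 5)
Your proof is correct, and while it rests on the same two pillars as the paper---a hard distribution supported on geometrically spaced defect sizes and a coupling-based bound on total variation---it takes a genuinely different route at the decisive step. The paper splits the sizes into two disjoint families (even versus odd powers of $\beta$), couples $X\sim\muEven$ and $Y\sim\muOdd$ \emph{independently} given a shared scale index $j$, and then invokes Markov and Chernoff-type concentration for hypergeometric random variables to show each query disagrees with probability $O(1/m)$; summing over queries gives $\|\vec Q^{\geq\lambda}(\muEven)-\vec Q^{\geq\lambda}(\muOdd)\|_{\TV}=O(q/m)$, and \cref{lem:TV-algo} supplies the $1/3$ lower bound. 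You instead lay out a single chain $d_0<\dots<d_k$ and use the \emph{nested} coupling $D^{(i)}\subset D^{(i+1)}$; because every $\lambda$-threshold output is a monotone set function of the defective set, nesting forces $X_j^{(i)}\le X_j^{(i+1)}$, so the per-query disagreement probability is exactly $p_j(i+1)-p_j(i)$, and the double sum $\sum_i\sum_j$ telescopes to at most $t$ with no concentration estimates at all. Your disjoint-valid-output inequality $\Pr[\mathrm{err}\mid i]+\Pr[\mathrm{err}\mid i+1]\ge 1-\|\mu_i-\mu_{i+1}\|_{\TV}$ replaces \cref{lem:TV-algo}, and averaging over all $k$ adjacent pairs replaces the single constant threshold. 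The only loose end is the routine check that $\beta$ can be enlarged so the ceilings in $d_i=\lceil L\beta^i\rceil$ preserve $d_{i+1}>\alpha d_i$ when $\alpha$ is close to $1$, which you flag. Your telescoping makes the information-theoretic content explicit---$t$ monotone Boolean queries can ``fire'' at most once each along the chain---and avoids hypergeometric tails entirely, whereas the paper's analysis exposes more quantitatively where a query spends its distinguishing power (near $j\approx m_*$), which could matter if one wanted to track constants or handle non-monotone query oracles.
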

In \Cref{Upper_Bound} we prove this result is tight for any $\alpha$ and constant $\lambda$ by extending the upper bound in \cite{bshouty2019lower} to the general threshold case.

\subsection{Distribution distinguishing problem}\label{sec:ddist}
A crucial idea of both prior work \cite{bshouty2019lower, bshouty2023improved} and our proof is the connection with the \emph{distribution distinguishing} problem, an algorithmic formulation of hypothesis testing. The general setting of the problem is as follows: given two probability distributions $\mu_0$ and $\mu_1$, a uniform bit $b\in\set{0,1}$ is chosen, then a sample $x$ is drawn from the distribution $\mu_b$. The objective is to determine $b$ from the provided sample $x$. Intuitively, the distribution distinguishing problem is hard if the two distributions are close, in the sense that $\mu_0(A)\approx \mu_1(A)$ for every subset $A$ in the probability space.

The earlier works of \cite{bshouty2019lower, bshouty2023improved} make use of the distribution distinguishing problem by showing that if too few queries are made, a vacant set of queries would be able to distinguish between two distributions. More precisely, for a non-adaptive, $\alpha$-estimation algorithm $\mathcal{A}$, they partition the random $q$ queries of $\mathcal{A}$ into $b$ buckets depending on the query sizes, and show that there exists a specific (i.e.,\ the choice does not depend on the randomness) bucket $\mathcal{B}$ that contains $O(q/b)$ queries with high probability. Moreover, $\mathcal{B}$ determines two distributions on sets of defective items that $\mathcal{A}$ can distinguish between, but where (essentially) the only informative queries are in $\mathcal{B}$. If $q = o(b)$, then $\mathcal{B}$ will typically contain zero queries, so $\mathcal{A}$ cannot distinguish between distributions, leading to a contradiction. The earlier paper \cite{bshouty2019lower} carries out this strategy with $b = \Theta(\log n/\log\log n)$, whereas the recent work \cite{bshouty2023improved} performs a more sophisticated iteration process to obtain the strengthened bound $\Omega(\log n / \log \log \stackrel{k}{\cdots} \log n)$ for any constant $k$.

Our proof circumvents the need to design bucket-dependent distributions and further complications, and simply adopts a pair of distributions ``naturally distinguishable'' by an $\alpha$-estimation algorithm.
We start by constructing a pair of distributions $(\muEven,\muOdd)$ on the planted set of defective items with two properties. The first is that the distributions should have sufficiently disjoint supports. That is, no element sampled from $\muEven$ can have a size within an $\alpha$-factor of the size of an element sampled from $\muOdd$. Thus, any algorithm which can $\alpha$-estimate the number of defective items must be able to distinguish between the two distributions. The second is that without making a large number of queries, the distributions \emph{induced by the queries} should be difficult to distinguish. Intuitively, the hardness requirement is related to how close the induced distributions are, and this is formally quantified by the total variation distance. 

More precisely, we choose the distributions to ``multiplicatively interleave'' the possible support sizes, where $\muEven$ is a distribution over sets of size even powers of $\lfloor \alpha \rfloor + 1$ and $\muOdd$ over odd powers. Observe we are essentially viewing the number of defective items on a logarithmic scale. In order to correctly approximate this number, any non-adaptive algorithm must choose the appropriate scale; otherwise, the query results will be identically 0 or 1 with high probability.  We randomly plant defective items such that if $o(\log n)$ queries are made, it is unlikely the correct scale is chosen, and thus little information can be learned.

Our analysis relies on the powerful \emph{coupling} technique, which we believe may have more applications for related problems. We will formalize the notions of distribution distance and query-induced distributions in \cref{sec:prelims}.

\paragraph{Concurrent work} This paper is a combination of independent and concurrent work by Bshouty and Harcos \cite{bshouty2023tight} and Cheung, Hatami, and Ostuni \cite{cheung2023tight}, both of which proved \cref{thm:main} using similar techniques.

\paragraph{Overview} 
In \cref{sec:prelims}, we lay out several tools and technical estimates that are required to prove our results, including the notion of query-induced distributions in \cref{subsec:notations}, a refresher on the hypergeometric distribution in \cref{subsec:hypergeom}, and fundamental facts about total variation distance and coupling in \cref{subsec:TVD}. \cref{sec:proof} contains the proof of \cref{thm:threshold}. 
We conclude with some remarks in \Cref{sec:conclude}. \Cref{Upper_Bound} contains a tight upper bound in the threshold query model, while Appendices~\ref{CouplingBC} and \ref{EandToDet} provide a more thorough background on coupling and randomized to deterministic conversions, respectively.

\section{Preliminaries}\label{sec:prelims}

\subsection{Notations}\label{subsec:notations}
We denote $[n]$ for $\set{1,\ldots,n}$, and adopt the standard computer science asymptotic notations. 

We introduce the following notations for the threshold query model.
In the non-adaptive setting, $q$ queries are specified by a collection of subsets $Q_1,\ldots,Q_q\subseteq [n]$, such that $Q_i$ does not depend on the responses to other queries. For a set of defective items $B\subseteq[n]$ and the query set $W\subseteq [n]$, we use the notation $W^{\geq \lambda}(\cdot)$ to denote the query result on $W$:
\[W^{\geq \lambda}(B)=\begin{cases}
    1 &\text{ if }\abs{W\cap B}\geq \lambda\\
    0 &\text{ otherwise}
\end{cases}.\]
For $\vec{Q}=(Q_1,\ldots,Q_q)$, we use the shorthand $\vec{Q}^{\geq \lambda}(B)=(Q_1^{\geq \lambda}(B),\ldots,Q_q^{\geq \lambda}(B))\in\set{0,1}^q$ to denote the tuple of $q$ queries. For a distribution $\mu$ on subsets of $[n]$, the \emph{induced distribution} is the distribution of $\vec{Q}^{\geq \lambda}(B)$ for $B$ sampled from $\mu$.

\subsection{Hypergeometric distribution}\label{subsec:hypergeom}
The \emph{hypergeometric distribution} $\mathcal{H}_{n,k,s}$ is naturally associated with the group testing problem. It is characterized by three parameters:
\begin{itemize}
    \item $n$: the total number of items;
    \item $k$: the number of marked items;
    \item $s$: the number of items sampled in a uniform draw (without replacement).
\end{itemize}
The number of marked items sampled in the draw is given by the distribution:
\[\Pr(r\text{ marked items sampled})=\frac{\binom{k}{r}\binom{n-k}{s-r}}{\binom{n}{s}}.\]
We adopt the convention that $\binom{N}{R}=0$ whenever $R>N$ or $R<0$.

It is a well-known fact that the hypergeometric random variable $H_{n,k,s}\sim \mathcal{H}_{n,k,s}$ has mean $ks/n$. For the proof of the main result, we will need the following concentration inequalities for hypergeometric distributions. Markov's inequality implies that
\begin{equation}
    \Pr(H_{n,k,s} \geq \gamma) \leq \frac{ks}{\gamma n}. \label{eq:markov}
\end{equation}
A Chernoff-type lower tail bound is implicit in \cite{mulzer2019proofs}: for $\xi<ks/n$,
\begin{equation}
    \Pr\left(H_{n,k,s} \leq \xi\right) \leq \exp\left(-\frac{(ks/n-\xi)^2}{2ks/n}\right).\label{eq:chernoff}
\end{equation}

\subsection{Total variation distance and coupling}\label{subsec:TVD}
As mentioned in \cref{sec:ddist}, the key technique of our proof is to construct two close distributions of sets of defective items. The closeness is quantified by the \emph{total variation distance}.
\begin{definition}[Total variation distance] Let $\mu_0$ and $\mu_1$ be two probability measures on the measurable space $(\mathcal{S}, \mathcal{F})$. The \textup{total variation distance} of the two measures is
\begin{equation*}
\|\mu_0 - \mu_1\|_{\TV} \defeq \sup_{A \in \mathcal{F}} |\mu_0(A) - \mu_1(A)|. \label{eq:TV}
\end{equation*}
\end{definition}

The above notion provides a mathematical interpretation of statistical closeness. From the algorithmic perspective, statistical closeness can be captured by the hardness of the distribution distinguishing problem. More precisely, the total variation distance can be viewed as a measure of how well the optimal algorithm for the distinguishing problem (also called \emph{distinguisher}) outperforms a random guess. 

Let $\mu_0$ and $\mu_1$ be two probability measures on the measurable space $(\mathcal{S}, \mathcal{F})$. We say that $\Psi: \mathcal{S}\to\set{0,1}$ is a deterministic distinguisher between $\mu_0$ and $\mu_1$ if 
\begin{eqnarray*}\label{RD-def}
    \Pr_{\substack{b \sim \{0,1\}\\x \sim \mu_b
    }}(\Psi(x)=b)\ge \frac{2}{3}.
\end{eqnarray*}

\begin{lemma}\label{lem:TV-algo}
    Let $\mu_0$ and $\mu_1$ be two probability measures on the measurable space $(\mathcal{S}, \mathcal{F})$. Let $\Psi:\mathcal{S}\to\set{0,1}$ be any deterministic distinguisher between $\mu_0$ and $\mu_1$. Then
    \[\frac{1}{3}\le 2\left(\Pr_{\substack{b \sim \{0,1\}\\x \sim \mu_b}}(\Psi(x)=b)- \frac{1}{2}\right)\leq \norm{\mu_0-\mu_1}_{\TV}.\]
\end{lemma}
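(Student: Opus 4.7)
The plan is to verify the two inequalities separately; both are short, with the right one being the substantive content and the left being essentially a restatement of the distinguisher definition.

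For the left inequality, I would simply note that by definition of a (deterministic) distinguisher we have $\Pr(\Psi(x)=b) \ge 2/3$, so
\[2\!\left(\Pr_{b,x}(\Psi(x)=b) - \tfrac{1}{2}\right) \;\ge\; 2\!\left(\tfrac{2}{3} - \tfrac{1}{2}\right) \;=\; \tfrac{1}{3}.\]
That disposes of the lower bound without further work.

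For the right inequality, the key step is to rewrite the success probability in terms of the measure of a single event. I would set $A \defeq \Psi^{-1}(1) \in \mathcal{F}$, which is a valid measurable set since $\Psi$ takes only two values. Conditioning on the uniform bit $b$ and using the law of total probability,
\[\Pr_{b,x}(\Psi(x) = b) \;=\; \tfrac{1}{2}\Pr_{x \sim \mu_0}(\Psi(x)=0) + \tfrac{1}{2}\Pr_{x \sim \mu_1}(\Psi(x)=1) \;=\; \tfrac{1}{2}\bigl(1 - \mu_0(A)\bigr) + \tfrac{1}{2}\mu_1(A).\]
Rearranging gives $2\bigl(\Pr_{b,x}(\Psi(x)=b) - \tfrac{1}{2}\bigr) = \mu_1(A) - \mu_0(A)$. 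I would then conclude by the definition of total variation distance, since $A \in \mathcal{F}$ is one specific event witnessing a difference between the two measures:
\[\mu_1(A) - \mu_0(A) \;\le\; |\mu_1(A) - \mu_0(A)| \;\le\; \sup_{A' \in \mathcal{F}}|\mu_0(A') - \mu_1(A')| \;=\; \|\mu_0 - \mu_1\|_{\TV}.\]

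There is no real obstacle here; the statement is essentially the standard fact that the total variation distance equals twice the optimal advantage of a deterministic tester, and everything follows once one identifies $A = \Psi^{-1}(1)$ as the natural witness set. The only mild subtlety worth flagging in writing is ensuring that $\Psi$ being a function to $\{0,1\}$ makes $A$ measurable by definition (a distinguisher is a measurable function by convention), so that applying the supremum in the definition of $\|\cdot\|_{\TV}$ to $A$ is legitimate.
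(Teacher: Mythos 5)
Your proof is correct and takes essentially the same route as the paper: identify the decision set $A$ determined by $\Psi$, rewrite $\Pr(\Psi(x)=b)-\tfrac12$ as $\tfrac12$ times a difference of measures of $A$, and bound by the definition of total variation distance. (Your writeup with $A=\Psi^{-1}(1)$ and the sign $\mu_1(A)-\mu_0(A)$ is, if anything, slightly more careful than the paper's, which calls $A$ the ``support of $\Psi$'' but then writes the algebra as though $A=\Psi^{-1}(0)$; the discrepancy is harmless since one only needs $\mu_i(A)-\mu_j(A)\le\|\mu_0-\mu_1\|_{\TV}$ in either order.)
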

\begin{proof}
     Let $A$ be the support of $\Psi$. Then
    \[\frac{1}{6}\le \Pr_{\substack{b \sim \{0,1\}\\x \sim \mu_b}}(\Psi(x)=b)-\frac{1}{2}=\frac{1}{2}\left(\Pr_{x\sim \mu_0}(x \in A)+\Pr_{x\sim \mu_1}(x\notin A)-1\right)=\frac{1}{2}(\mu_0(A)-\mu_1(A))\le \frac{1}{2}\norm{\mu_0-\mu_1}_{\TV}.\]    
\end{proof}

By averaging, the above bound also holds for randomized distinguishers.

Often exactly computing the total variation distance is difficult. A fundamental connection with the notion of \emph{coupling} provides a way to upper bound the total variation distance.
\begin{definition}[Coupling]
Let $\mu_1$ and $\mu_2$ be two probability measures on the measurable space $(\mathcal{S}, \mathcal{F})$. A \textup{coupling} of $\mu_1$ and $\mu_2$ is a probability measure $\zeta$ on the product space $(\mathcal{S}\times \mathcal{S}, \mathcal{F} \times \mathcal{F})$ such that
\[
\zeta(A \times \mathcal{S}) = \mu_1(A) \textup{ and } \zeta(\mathcal{S} \times A) = \mu_2(A) \textup{ for all } A \in \mathcal{F}.
\]
\end{definition}
\begin{theorem}[Coupling inequality]\label{couplingineq}
    Let $\mu_1$ and $\mu_2$ be two probability measures on the measurable space $(\mathcal{S}, \mathcal{F})$. For any coupling $(X,Y)$ of $\mu_1$ and $\mu_2$,
    \[
    \|\mu_1 - \mu_2\|_{\TV} \le \Pr(X \ne Y).
    \]
\end{theorem}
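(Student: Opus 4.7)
The plan is to prove the bound pointwise for each measurable set $A \in \mathcal{F}$ and then take the supremum. Fix such a set $A$, and let $(X,Y)$ be any coupling of $\mu_1$ and $\mu_2$. By the defining marginal property of a coupling, $\mu_1(A) = \Pr(X \in A)$ and $\mu_2(A) = \Pr(Y \in A)$. The key idea is to decompose each of these probabilities according to whether the other coordinate is in $A$ or not.

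Specifically, I would write
\[
\mu_1(A) - \mu_2(A) = \Pr(X \in A) - \Pr(Y \in A) = \Pr(X \in A,\, Y \notin A) - \Pr(X \notin A,\, Y \in A),
\]
where both terms on the right arise from subtracting the common quantity $\Pr(X \in A,\, Y \in A)$. Discarding the second (nonnegative) term and observing that the event $\{X \in A,\, Y \notin A\}$ is contained in $\{X \ne Y\}$, I obtain
\[
\mu_1(A) - \mu_2(A) \le \Pr(X \in A,\, Y \notin A) \le \Pr(X \ne Y).
\]
By swapping the roles of $\mu_1$ and $\mu_2$, the same argument yields $\mu_2(A) - \mu_1(A) \le \Pr(X \ne Y)$, so $|\mu_1(A) - \mu_2(A)| \le \Pr(X \ne Y)$. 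Since this bound is independent of $A$, taking the supremum over $A \in \mathcal{F}$ in the definition of total variation distance gives the claim.

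There is no real obstacle here; the proof is essentially a one-line manipulation once one notices the right decomposition. The only subtle point is to make sure the measurability of the relevant events (such as $\{X \in A,\, Y \notin A\}$ and $\{X \ne Y\}$) is justified, which follows from $(X,Y)$ being a random variable on the product $\sigma$-algebra $\mathcal{F} \times \mathcal{F}$ together with standard measurability of the diagonal in product spaces (under the mild regularity assumptions implicit in the setting).
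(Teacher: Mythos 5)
Your proof is correct and takes essentially the same approach as the paper's: decompose $\Pr(X\in A)-\Pr(Y\in A)$, cancel the common term, and observe that the surviving event is contained in $\{X\ne Y\}$. The only cosmetic difference is that you split on whether $Y\in A$ while the paper splits on whether $X=Y$; you are also slightly more explicit in handling the absolute value by symmetry.
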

\begin{proof}
For any $A\in\mathcal{F}$, we have
\begin{align*}
    \mu_1(A) - \mu_2(A) &= \Pr(X \in A) - \Pr(Y \in A) \\
    &= \Pr(X \in A, X = Y) + \Pr(X \in A, X \ne Y) - \Pr(Y \in A, X = Y) - \Pr(Y \in A, X \ne Y) \\
    &= \Pr(X \in A, X \ne Y) - \Pr(Y \in A, X \ne Y) \\
    &\le \Pr(X \ne Y). \qedhere
\end{align*}    
\end{proof}
We direct unfamiliar readers to \Cref{CouplingBC} for additional background.

\section{Proof of \cref{thm:threshold}}\label{sec:proof}
This section is dedicated to the proof of \cref{thm:threshold}. To recap the setting, for a fixed $\lambda\in [n]$, the $\lambda$-threshold query oracle detects whether a query set contains at least $\lambda$ defective items. The promise that the collection contains at least $L$ ($\ge \lambda$) and at most $U$ defective items is given. 

We construct a pair of distributions that is hard for any $\alpha$-estimation algorithm to distinguish. Let $\beta=\floor{\alpha}+1$, and define
\begin{align*}
    \SEven &=\set{L\beta^2, L\beta^4,\ldots, L\beta^{2m}},\\
    \SOdd &=\set{L\beta, L\beta^3,\ldots, L\beta^{2m-1}},
\end{align*}
where $m$ is the largest integer which $L\beta^{2m}\leq U$. It is clear that $m=\Theta(\log_\beta (U/L))=\Theta(\log_\alpha (U/L))$.

The distribution $\muEven$ (resp. $\muOdd$) is constructed by the following sampling procedures:
\begin{enumerate}
    \item Sample $s$ from $\SEven$ (resp. $\SOdd$) uniformly at random;
    \item Sample a set of $s$ items uniformly at random.
\end{enumerate}

We claim that if the deterministic queries $\vec{Q}^{\geq\lambda}$ output a valid estimation, the same queries
distinguish $\vec{Q}^{\geq\lambda}(\muOdd)$ and $\vec{Q}^{\geq\lambda}(\muEven)$ without error. Indeed by the choice of $\beta$, there is exactly one $s\in\SOdd\cup\SEven$ which the estimation algorithm output is within the range of $[s,\alpha s]$. Thus \cref{lem:TV-algo} implies 
\begin{equation}\label{eq:constantLower}
    \|\vec{Q}^{\geq\lambda}(\muEven) -\vec{Q}^{\geq\lambda}(\muOdd)\|_{\TV} \ge \frac{1}{3}.
\end{equation}

It remains to derive an upper bound for $\vec{Q}^{\geq\lambda}(\muEven)$ and $\vec{Q}^{\geq\lambda}(\muOdd)$ in terms of the number of queries $q$. In the next lemma, we show that for any deterministic queries, the total variation distance of the induced distributions is proportional to $\frac{1}{\log_\alpha (U/L)}$ and the number of queries.
\begin{lemma}\label{main-lemma}
    Let $\vec{Q}^{\geq\lambda}=(Q_1^{\geq\lambda},\ldots,Q_q^{\geq\lambda})$ be $q$ deterministic queries with the threshold query oracle. Then
    \[\norm{\vec{Q}^{\geq\lambda}(\muEven)-\vec{Q}^{\geq\lambda}(\muOdd)}_{\TV}=O\left(\frac{q}{\log_\alpha (U/L)}\right).\]
\end{lemma}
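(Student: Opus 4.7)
\medskip

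\noindent\textbf{Proof plan.} The plan is to exhibit an explicit coupling between $\muEven$ and $\muOdd$ and then apply the coupling inequality (\Cref{couplingineq}) together with a union bound over the $q$ queries. For the coupling, I would first sample $j\in[m]$ uniformly, then sample a random set $B_{\text{even}}\subseteq[n]$ of size $L\beta^{2j}$ uniformly at random, and finally take $B_{\text{odd}}$ to be a uniformly random subset of $B_{\text{even}}$ of size $L\beta^{2j-1}$. The marginals are exactly $\muEven$ and $\muOdd$, and crucially $B_{\text{odd}}\subseteq B_{\text{even}}$, so $|W\cap B_{\text{odd}}|\le|W\cap B_{\text{even}}|$ for every query $W$. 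By \Cref{couplingineq} and a union bound,
\[
\|\vec Q^{\ge\lambda}(\muEven)-\vec Q^{\ge\lambda}(\muOdd)\|_{\TV}\;\le\;\sum_{i=1}^{q}\frac{1}{m}\sum_{j=1}^{m}p_{i,j},
\]
where $p_{i,j}:=\Pr\bigl(|W_i\cap B_{\text{odd}}|<\lambda\le|W_i\cap B_{\text{even}}|\bigm|\text{scale }j\bigr)$. So it suffices to show that for each fixed query $W=W_i$, the sum $\sum_{j=1}^{m}p_{i,j}$ is $O(1)$, after which the bound becomes $O(q/m)=O\bigl(q/\log_\alpha(U/L)\bigr)$.

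\medskip

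\noindent To bound $\sum_j p_{i,j}$ for a fixed $W$ of size $w$, set $\mu_j^{\text{odd}}=wL\beta^{2j-1}/n$ and $\mu_j^{\text{even}}=\beta\mu_j^{\text{odd}}$, the respective hypergeometric means. Note $p_{i,j}\le\min\!\bigl(\Pr(|W\cap B_{\text{even}}|\ge\lambda),\,\Pr(|W\cap B_{\text{odd}}|<\lambda)\bigr)$, and partition the scales into three regimes:
\begin{itemize}
\item (small scales) $\mu_j^{\text{even}}\le\lambda/2$: use Markov \eqref{eq:markov} to get $p_{i,j}\le\mu_j^{\text{even}}/\lambda$;
\item (large scales) $\mu_j^{\text{odd}}\ge 2\lambda$: use the Chernoff bound \eqref{eq:chernoff} with $\xi=\lambda-1$ to get $p_{i,j}\le\exp(-\mu_j^{\text{odd}}/8)$ (roughly);
\item (middle scales) $\lambda/(2\beta)<\mu_j^{\text{odd}}<2\lambda$: bound $p_{i,j}\le 1$ trivially.
\end{itemize}
Because the sequence $\{\mu_j^{\text{odd}}\}_j$ grows geometrically with ratio $\beta^2\ge 4$, there are only $O(1)$ middle scales; and the sums over the small- and large-scale regimes are geometric series dominated by their boundary terms, each bounded by an absolute constant. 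Summing the three contributions yields $\sum_j p_{i,j}=O(1)$, which completes the argument.

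\medskip

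\noindent The main obstacle, and where some care is needed, is the large-scale regime: the Chernoff-type bound \eqref{eq:chernoff} is only useful when $\mu_j^{\text{odd}}$ exceeds $\lambda$ by a constant factor, and the quadratic exponent needs to be argued cleanly when $\lambda$ is small (e.g., $\lambda=1$) so that the geometric sum still collapses to $O(1)$. A minor technicality is that the upper endpoint $L\beta^{2m}$ might not exactly equal $U$; this affects only constants absorbed in the $\Theta(\log_\alpha(U/L))$ estimate of $m$ and does not alter the argument. Everything else (the coupling construction, the union bound, and the regime split) is routine once those two estimates are in hand.
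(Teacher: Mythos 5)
Your proof is correct and follows the same overall strategy as the paper: couple $\muEven$ and $\muOdd$ by sharing the scale index $j$, apply the coupling inequality with a union bound over the $q$ queries, and bound each per-query disagreement probability by splitting the $m$ scales into small/middle/large regimes handled by Markov, a trivial bound, and Chernoff respectively, noting the resulting sums are geometric and hence $O(1)$. The one genuine difference is your choice of coupling: you draw $B_{\text{odd}}$ as a uniform random subset of $B_{\text{even}}$ (which does give the correct marginal, by a standard two-stage sampling identity), making the coupling monotone, whereas the paper draws the two sets of sizes $L\beta^{2j}$ and $L\beta^{2j-1}$ independently once $j$ is fixed. Your monotone version is slightly cleaner: since $B_{\text{odd}}\subseteq B_{\text{even}}$, the disagreement event is one-sided, $|W\cap B_{\text{odd}}|<\lambda\le|W\cap B_{\text{even}}|$, so there is only one tail event per scale, while the paper must write the disagreement probability as $P_j^{(1)}+P_j^{(2)}$ covering both orderings and then asserts that the sum of $P_j^{(2)}$ is handled analogously. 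The rest of your estimates match the paper's in substance; in particular the small-$\lambda$ worry you raise about the Chernoff step is resolved the same way the paper resolves it: once $\mu_j^{\text{odd}}\ge 2\lambda$ one has $(\mu-\lambda)^2/(2\mu)\ge\mu/8$, and the resulting series $\sum_{r\ge 0}\exp(-\lambda\beta^{2r}/4)$ converges to an absolute constant because $\beta\ge 2$ forces $\beta^{2r}\ge r+1$, so even $\lambda=1$ is fine.
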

As a randomized algorithm is simply a distribution over deterministic queries (see \Cref{EandToDet}), \cref{main-lemma} combined with \cref{eq:constantLower} implies any randomized $\alpha$-estimation algorithm using $q$ queries must satisfy
\[ \frac{1}{3}\leq \norm{\vec{Q}^{\geq\lambda}(\muEven)-\vec{Q}^{\geq\lambda}(\muOdd)}_{\TV} \leq O\left(\frac{q}{\log_\alpha (U/L)}\right) \implies q=\Omega(\log_\alpha (U/L)),\]
and this completes the proof.

\begin{proof}[Proof of \cref{main-lemma}]
    We consider the coupling $(X,Y)$ of $\muEven$ and $\muOdd$ defined by the following sampling procedures:
    \begin{enumerate}
        \item Sample $j\in[m]$ uniformly at random;
        \item Sample a size-$L\beta^{2j}$ set uniformly at random to be $X$;
        \item Sample a size-$L\beta^{2j-1}$ set uniformly at random to be $Y$.
    \end{enumerate}
    It is direct to see that $(X,Y)$ is indeed a coupling for $\muEven$ and $\muOdd$; therefore $(\vec{Q}^{\geq\lambda}(X),\vec{Q}^{\geq\lambda}(Y))$ is a coupling for $\vec{Q}^{\geq\lambda}(\muEven)$ and $\vec{Q}^{\geq\lambda}(\muOdd)$ (see \Cref{Composition}). By the coupling inequality (\cref{couplingineq}) and a union bound, we have
    \begin{align*}
        \norm{\vec{Q}^{\geq\lambda}(\muEven)-\vec{Q}^{\geq\lambda}(\muOdd)}_{\TV} 
        \leq \Pr(\vec{Q}^{\geq\lambda}(X)\neq \vec{Q}^{\geq\lambda}(Y))
        \leq \sum^q_{i=1} \Pr(Q_i^{\geq\lambda}(X)\neq Q_i^{\geq\lambda}(Y)).
    \end{align*}
    It suffices to show that 
    \begin{equation}
        \Pr(W^{\geq\lambda}(X)\neq W^{\geq\lambda}(Y))=O\left(\frac{1}{\log_\alpha(U/L)}\right) \label{eq:disagree-prob}
    \end{equation}
    for any $W\subseteq [n]$. The intuition behind the coupling is that the sizes of $X$ and $Y$ are within a $\beta$-factor; consequently one may expect that the threshold query results are most likely equal for defect sets of comparable sizes. Towards proving \cref{eq:disagree-prob}, we actually prove the stronger statement that with high probability, $W^{\geq \lambda}(X)=W^{\geq \lambda}(Y)=0$ when $j$ is small, and $W^{\geq \lambda}(X)=W^{\geq \lambda}(Y)=1$ when $j$ is large. 

    The disagreement probability is an average of conditional probability in terms of $j$:
    \[\Pr(W^{\geq\lambda}(X)\neq W^{\geq\lambda}(Y))=\frac{1}{m}\sum_{j=1}^m \Pr(\set{W^{\geq\lambda}(X),W^{\geq\lambda}(Y)}=\set{0,1}~|~\abs{X}=L\beta^{2j},\abs{Y}=L\beta^{2j-1}).\]
    Denote the conditional probability by $P_j$, and let $k=\abs{W}$.  $P_j$ can be written in terms of two independent hypergeometric random variables:
    \begin{align}
        P_j &=\Pr(H_{n,k,L\beta^{2j}}\geq \lambda)\Pr(H_{n,k,L\beta^{2j-1}}<\lambda)+\Pr(H_{n,k,L\beta^{2j}}<\lambda)\Pr(H_{n,k,L\beta^{2j-1}}\geq \lambda)~\reflectbox{$\defeq$}~P_j^{(1)}+P_j^{(2)}. \label{eq:prob}
    \end{align}

    Let $m_*=m_*(k)\defeq \max\set{j\in\Z:~kL\beta^{2j}/n\leq \lambda}$.     
    We split the sum into the following three ranges:
    \def\JLow{J_{\text{Low}}}
    \def\JMid{J_{\text{Mid}}}
    \def\JHigh{J_{\text{High}}}
    \[\JLow=(-\infty,m_*]\cap [m],\qquad \JMid=\set{m_*+1}\cap[m],\qquad\JHigh=[m_*+2,\infty)\cap[m].\]
    We want to show that
    \[\sum_{j=1}^m P^{(1)}_j=\sum_{j\in \JLow} P^{(1)}_j+\sum_{j\in \JMid} P^{(1)}_j+\sum_{j\in\JHigh} P^{(1)}_j =O(1),\]
    and the sum of $P^{(2)}_j$ can be handled by a similar argument.    
    It is possible that some of these index sets are empty, in which case one can simply omit the empty sums. For the rest of the proof, we assume that all three sets are non-empty.
     
    For $j\in \JLow$, we use Markov's inequality (\cref{eq:markov}):
    \begin{equation*}
        P^{(1)}_j=\Pr(H_{n,k,L\beta^{2j}}\geq \lambda)\Pr(H_{n,k,L\beta^{2j-1}}< \lambda) \leq \Pr(H_{n,k,L\beta^{2j}}\geq \lambda)\leq \frac{kL\beta^{2j}}{\lambda n}. 
    \end{equation*}
    From this, we can show that the contribution for $j\in \JLow$ is constant:
    \[\sum_{j\in \JLow} P^{(1)}_j \leq \frac{kL}{\lambda n} \sum_{j=1}^{m_*} \beta^{2j}= \frac{kL}{\lambda n} \cdot\frac{\beta^2(\beta^{2m_*}-1)}{\beta^2-1}\leq  O(1)\cdot\frac{kL\beta^{2m_*}}{\lambda n}=O(1).\]
    
    For $j\in\JHigh$, we use the Chernoff-type bound (\cref{eq:chernoff}):
    \begin{align*}
        P^{(1)}_j=\Pr(H_{n,k,L\beta^{2j}}\geq \lambda)\Pr(H_{n,k,L\beta^{2j-1}}<\lambda)
        \leq \Pr(H_{n,k,L\beta^{2j-1}}\leq \lambda) 
        \leq \exp\left(-\frac{(kL\beta^{2j-1}/n-\lambda)^2}{2kL\beta^{2j-1}/n}\right). 
    \end{align*}
    By the definition of $m_*$ and the fact that $\beta\geq 2$, we have $\frac{kL}{n}\beta^{2(m_*+2)-1}= \frac{kL}{n}\beta^{2m_*+3}>\beta\lambda\geq 2\lambda$. It is direct to check that for a fixed $\xi$, $\frac{(x-\xi)^2}{x}\geq \frac{x}{4}$ whenever $x\geq 2\xi$. 
    
    Applying the simplified bound, we can show that the contribution for $j\in\JHigh$ is also constant:
    \[\sum_{j\in\JHigh} P^{(1)}_j \leq \sum_{j=m_*+2}^{m} \exp\left(-\frac{1}{8}\cdot \frac{kL\beta^{2j-1}}{n}\right) \leq \sum^\infty_{r=0} \exp\left(-\frac{\lambda\beta^{2r}}{4}\right) \leq \sum^\infty_{r=1} (e^{-\lambda/4})^r=O(1).\]
    For $j\in\JMid$, we use the trivial bound $P^{(1)}_j\leq 1$.    
    Combining with an analogous argument for $P^{(2)}_j$, we have shown that
    \[\Pr(W^{\geq\lambda}(X)\neq W^{\geq\lambda}(Y))=\frac{1}{m}\sum^m_{j=1} (P^{(1)}_j+P^{(2)}_j) = O\left(\frac{1}{m}\right)=O\left(\frac{1}{\log_\alpha (U/L)}\right). \qedhere\]
\end{proof}

\section{Concluding remarks}\label{sec:conclude}
This work illustrates the power of utilizing the distribution distinguishing problem as a lower bound technique for the group testing estimation problem. The main result uses a fairly straightforward coupling, which simply matches the planted defect set sizes within a $\beta$-factor, and this coupling already yields the desired tight lower bound for our case. Therefore it is reasonable to assert that the potential of this technique is not fully explored. One possible future direction is to extend this technique to prove lower bounds for other query models such as gap-threshold queries \cite{damaschke2010bounds} and density tests \cite{ADL11}.

\section{Acknowledgments}
This work was done in part during TSC, HH, and AO's visit at the Simons Institute for the Theory of Computing. We thank anonymous reviewers for useful feedback on an earlier version of this paper.
    
\bibliographystyle{alpha}  
\bibliography{ref} 

\appendix
\section{The Upper Bound}\label{Upper_Bound}
In \cite{bshouty2019lower}, Bshouty presented a polynomial-time $O(\log n)$ query constant-estimation algorithm using ideas similar to \cite{damaschke2010competitive, falahatgar2016estimating}. In this section, we expand upon this result to cover any $\alpha$-estimation and the threshold query model for any $\alpha$ and constant $\lambda$. Specifically, we prove:

\begin{theorem}\label{thm:thresholdUpper}
    Let $\lambda\in \mathbb{N}$ be a constant and $\alpha=1+\Omega(1)$. Suppose $L$ and $U$ are integers that satisfy $\lambda\leq L<U\leq n$. Under the $[L,U]$-promise, there is a non-adaptive randomized $\alpha$-estimation algorithm that makes $\Omega(\log_\alpha (U/L))$ $\lambda$-threshold queries.
\end{theorem}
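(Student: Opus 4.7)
The plan is to extend the $O(\log n)$-query constant-estimation algorithm of Bshouty \cite{bshouty2019lower} to general $\alpha$-estimation with $\lambda$-threshold queries by placing queries at a geometric sequence of scales spanning $[L,U]$. Fix $c_\lambda > 0$ to be the unique constant (depending only on $\lambda$) for which a Poisson random variable with mean $c_\lambda$ equals or exceeds $\lambda$ with probability $1/2$. Set $\gamma = \sqrt{\alpha}$, so $\gamma = 1 + \Omega(1)$, let $m = \lceil \log_\gamma(U/L) \rceil = O(\log_\alpha(U/L))$, and let $C = C(\alpha, \lambda)$ be a sufficiently large constant, to be fixed at the end of the analysis.

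For the algorithm itself, for each $i \in \{0, 1, \dots, m\}$ set $s_i = L\gamma^i$ and $t_i = \min(n, \lceil c_\lambda n/s_i \rceil)$, and include $C$ independent uniformly random size-$t_i$ subsets of $[n]$ among the queries. The total query count is $C(m+1) = O(\log_\alpha(U/L))$. After observing the responses, identify $i^*$ as the smallest index $i$ at which strictly fewer than $C/2$ of the $C$ responses at scale $i$ are $1$, and output $d^* = \gamma s_{i^*}$.

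For the analysis, let $d$ denote the actual number of defectives, $j^* = \log_\gamma(d/L)$ (not necessarily an integer), and $p_i = \Pr(H_{n, d, t_i} \ge \lambda)$. Since $H_{n, d, t_i}$ has mean $d t_i/n \approx c_\lambda \gamma^{j^* - i}$, by the Poisson limit and monotonicity $p_i$ strictly decreases in $i$ and passes through $1/2$ near $i = j^*$. Combining \eqref{eq:markov} and \eqref{eq:chernoff} yields constants $\delta = \delta(\alpha, \lambda) > 0$ and $\eta = \eta(\alpha, \lambda) > 0$ such that $|p_i - 1/2| \ge \delta$ whenever $|i - j^*| = 1$, and $\min(p_i, 1 - p_i) \le e^{-\eta \gamma^{|i - j^*|}}$ whenever $|i - j^*| \ge 2$. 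A Chernoff bound then shows the majority vote at level $i$ is incorrect with probability at most $e^{-\Omega(C)}$ when $|i - j^*| = 1$ and at most $e^{-\Omega(C \gamma^{|i - j^*|})}$ when $|i - j^*| \ge 2$; summing this doubly geometric tail over all $m + 1$ levels yields a total failure probability of at most $e^{-\Omega(C)} \le 1/3$, once $C$ is a sufficiently large constant. On the complementary event, $|i^* - j^*| \le 1$, so $d \le d^* = \gamma s_{i^*} \le \gamma^2 d = \alpha d$, confirming $d^*$ is a valid $\alpha$-estimate.

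The main obstacle is verifying the claimed quantitative separation of $p_i$ from $1/2$ in the hypergeometric (rather than Poisson) regime, and handling the truncation $t_i = n$ that may occur when $s_i$ is comparable to $c_\lambda$; in the latter case the promise $L \ge \lambda$ ensures $H_{n, d, n} = d \ge \lambda$, so the truncated queries deterministically return $1$, which is consistent with $p_i$ close to $1$ and causes no harm. A further bookkeeping issue is that $c_\lambda$ may slightly exceed $\lambda$ for some $\lambda$, but since $s_i \ge L \ge \lambda$ the resulting $t_i$ exceeds $n$ only for $O(1)$ small values of $i$, each handled by the same truncation argument.
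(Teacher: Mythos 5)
Your algorithm is close in spirit to the paper's: both sweep a geometric sequence of scales, run $O(1)$ queries per scale, and identify the scale at which the query responses ``flip.'' However, there is a genuine gap in the step where you assert constants $\delta, \eta > 0$ such that $|p_i - 1/2|\ge \delta$ near $i=j^*$ and the doubly-geometric bound farther out. Your threshold $1/2$ is calibrated against the \emph{Poisson limit}, which only describes $\Pr(H_{n,d,t_i}\ge\lambda)$ accurately when $d$ is large; for $d$ near $\lambda$, the hypergeometric is binomial-like rather than Poisson-like, and the $p_i = 1/2$ crossing lands at a mean other than $c_\lambda$, i.e.\ at a scale $i$ that is \emph{not} within $1$ of $j^*$. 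Concretely, take $\lambda=1$, $L=d=1$, $\gamma=1.1$, $\alpha=\gamma^2$. Then $c_1=\ln 2$, $p_i = t_i/n \approx (\ln 2)/\gamma^i$, and $p_i$ crosses $1/2$ only near $i \approx \log_\gamma(2\ln 2) \approx 3.4$, whereas $j^*=0$; your algorithm outputs $d^*\approx\gamma^5\approx 1.61 \notin [1,\alpha]=[1,1.21]$ with probability $\to 1$ as $C$ grows. This is not a bookkeeping issue that the $t_i = n$ truncation fixes. The paper confronts exactly this $d$-dependence: item~\ref{I25} shows $P_\lambda(d,\lambda/(cd))$ converges at rate $O(1/d)$ via Lemma~\ref{Ec}, the algorithm compares against the fixed reference value $P_\lambda(d',\lambda/(cd'))$ for a constant $d'$ rather than against $1/2$, and Lemma~\ref{Asump_d} runs an auxiliary group-testing routine to cover $d < d'$ outright. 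You would need an analogous device.

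There is a second, more mechanical gap: with spacing $\gamma = \alpha^{1/2}$ you have no slack for rounding. If $j^*$ sits just below an integer (say $j^* = j+1-\epsilon$), then $p_{j+1}$ is pinned to $\approx 1/2$ with no quantitative separation, so with constant probability the procedure slides to $i^*=j+2$ and outputs $d^* = L\gamma^{j+3} = d\,\gamma^{2+\epsilon} > \alpha d$. The asserted uniform $\delta$ for $|i - j^*| = 1$ therefore does not exist. The paper avoids this by using the finer grid $\alpha^{i/4}$ (four points per $\alpha$-factor) together with the $\Delta(\alpha)/8$ additive-error budget, leaving an $\alpha^{1/4}$ buffer so the returned value lands in $[d,\alpha^{3/4}d]\subseteq [d,\alpha d]$. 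Your argument should similarly refine the grid to at least $\alpha^{1/3}$ or $\alpha^{1/4}$ and carry the resulting slack through the final bound $d \le d^* \le \alpha d$. Beyond these two issues, the rest of your outline (majority vote, union bound driven by the Chernoff tail \eqref{eq:chernoff} for $i<j^*$, the observation that only the low scales contribute to the union bound) is sound and tracks the mechanism in the paper's proof, which uses Lemma~\ref{Chernoff} together with the geometric decay from item~\ref{I3} for the same purpose.
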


\subsection{Definitions and Preliminary Results}
In this section, we give some definitions and results we will need to prove Theorem~\ref{thm:thresholdUpper}.

A $\lambda$-threshold {\it $p$-query} is a query $Q$ that contains each $i\in [n]$ randomly and independently with probability~$p$.
For any constant $\lambda$ and given a $\lambda$-threshold $p$-query $Q^{\ge \lambda}$, we define
\begin{eqnarray}\label{PLam}
    P_\lambda(d,p):=\Pr[Q^{\ge \lambda}(I)=1]=1-\sum_{i=0}^{\lambda-1} {d\choose i}p^i(1-p)^{d-i}=\sum_{i=\lambda}^{d} {d\choose i}p^i(1-p)^{d-i},
\end{eqnarray} where $|I|=d$.
For $\lambda=1$, we define $$P(d,p):=P_1(d,p)=\Pr[Q(I)=1]=1-(1-p)^d,$$ where $Q$ is a $p$-query ($1$-threshold $p$-query) and $|I|=d$. 
The following lemma enables us to assume that $d>d'$ for any constant $d'$ independent of $n$.
\begin{lemma}\label{Asump_d}
    Let $d'=O_n(1)$ be a constant. Any algorithm that $\alpha$-estimates $d$ assuming $d>d'$ with $O(\log(1/\delta)\log n)$ queries can be modified to an algorithm that $\alpha$-estimates $d$ for any $d$ with $O(\log(1/\delta)\log n)$ queries.
\end{lemma}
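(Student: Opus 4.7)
The plan is to augment $\mathcal{A}$ with a non-adaptive ``small-$d$ detector'' that uses an additional $O(\log(1/\delta))$ $\lambda$-threshold $p$-queries so that the composite algorithm either pinpoints $d$ exactly when $d \le d'$ or delegates to $\mathcal{A}$ when $d > d'$. Because $\lambda$ and $d'$ are both $O(1)$, the set $\{\lambda, \lambda+1, \ldots, d'\}$ of ``small'' candidate values has constant size, so the detector only needs to distinguish among a constant number of alternatives together with the catch-all case $d > d'$.

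First I would fix a constant $p \in (0,1)$ (for concreteness $p = 1/2$) and note, via a standard coupling, that the map $d \mapsto P_\lambda(d, p)$ defined in \eqref{PLam} is strictly increasing in $d$. Consequently the finite sequence
\[
P_\lambda(\lambda, p) < P_\lambda(\lambda+1, p) < \cdots < P_\lambda(d'+1, p)
\]
has a minimum consecutive gap $\gamma = \Omega(1)$ (a constant depending only on $\lambda$, $d'$, $p$), and monotonicity yields $P_\lambda(d, p) \ge P_\lambda(d'+1, p)$ for every $d > d'$. I would then perform $m = O(\log(1/\delta))$ independent $\lambda$-threshold $p$-queries and let $\hat p$ be the empirical fraction of positive responses; Hoeffding's inequality gives $|\hat p - P_\lambda(d, p)| < \gamma/4$ with probability at least $1 - \delta/2$. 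The detector outputs the unique $k \in \{\lambda, \ldots, d'\}$ with $|\hat p - P_\lambda(k, p)| < \gamma/4$ if such a $k$ exists, and otherwise declares ``$d > d'$''.

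The composite algorithm runs the detector and $\mathcal{A}$ in parallel (legitimate because both are non-adaptive), returning the detector's answer whenever some $k \in \{\lambda, \ldots, d'\}$ is identified, and $\mathcal{A}$'s answer otherwise. A union bound, after setting each component's failure probability to at most $\delta/2$, shows the composite algorithm outputs a valid $\alpha$-estimation with probability $\ge 1 - \delta$. The total query count is $O(\log(1/\delta)) + O(\log(1/\delta)\log n) = O(\log(1/\delta)\log n)$, matching the claim.

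The only real obstacle is certifying that the detector distinguishes ``$d > d'$'' from every element of $\{\lambda, \ldots, d'\}$ uniformly, since $d$ may be arbitrarily large in the former case. Strict monotonicity resolves this cleanly: when $d > d'$, one has $\hat p \ge P_\lambda(d, p) - \gamma/4 \ge P_\lambda(d'+1, p) - \gamma/4 \ge P_\lambda(d', p) + 3\gamma/4$ with high probability, so $\hat p$ is at distance strictly greater than $\gamma/4$ from every $P_\lambda(k, p)$ with $k \le d'$, triggering the ``$d > d'$'' branch unambiguously regardless of how large $d$ is.
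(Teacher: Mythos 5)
Your proof is correct, and it takes a genuinely different route from the paper's. The paper's modification first runs a Chernoff-based two-sided test that accepts for $d \le d'$ and rejects for $d \ge 2d'$, and then, for the small regime, invokes the non-adaptive exact group-testing algorithm of Cheraghchi (cited as~\cite{cheraghchi2013noise}), which uses $O(d'^2 \log d' \log n) = O(\log n)$ additional queries to recover the defective set outright. The factor-2 buffer between $d'$ and $2d'$ absorbs the ambiguity of the test, and exact recovery gives $d$ itself. Your argument dispenses with both ingredients: you observe that for constant $\lambda$, $d'$, and $p$, the values $P_\lambda(\lambda,p) < P_\lambda(\lambda+1,p) < \cdots < P_\lambda(d'+1,p)$ are a constant-length strictly increasing chain (strict monotonicity follows from the coupling $P_\lambda(k+1,p)-P_\lambda(k,p) = p\Pr[\mathrm{Bin}(k,p)=\lambda-1] > 0$), hence have a constant minimum gap $\gamma$, so $O(\log(1/\delta))$ i.i.d.\ $p$-queries already let the empirical acceptance rate $\hat p$ pick out the exact value of $d$ when $d \le d'$, or certify $d > d'$ otherwise; the monotonicity handles arbitrarily large $d$ uniformly, as you note. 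Your route is more self-contained and slightly leaner in query count (it adds only $O(\log(1/\delta))$ rather than $O(\log n) + O(\log(1/\delta))$ extra queries), and it sidesteps the question of whether the cited exact-recovery algorithm is stated for $\lambda$-threshold queries. One small caveat worth making explicit: if $d' < \lambda$ the candidate set $\{\lambda,\ldots,d'\}$ is empty, but then the $[L,U]$-promise $d \ge L \ge \lambda > d'$ means the detector is vacuous and one always delegates to $\mathcal{A}$, so nothing breaks.
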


\begin{proof} Let $A$ be an algorithm that, with probability at least $1-\delta/3$, $\alpha$-estimates $d$ assuming $d>d'$. In the proof of Theorem~\ref{thm:thresholdUpper}, we demonstrate the existence of a constant $c$ such that $P_\lambda(d',\lambda/(cd'))-P_\lambda(d',\lambda/(c(2d'))$ is constant. Therefore, by Chernoff bound, we can augment $A$ with
$O(\log( 1/\delta))$ queries of a non-adaptive algorithm that, with probability at least $1-\delta/3$, accepts if $d\le d'$ and rejects if $d\ge 2d'$. Additionally, we incorporate into $A$ the queries from the non-adaptive algorithm that, with probability at least $1-\delta/3$, identifies all the defective items, assuming their count is less than $2d'$. By~\cite{cheraghchi2013noise}, the number of queries in the latter algorithm is $O(d'^2\log d'\log n)=O(\log n)$. 

If the second algorithm accepts ($d\le 2d'$), we employ the algorithm from~\cite{cheraghchi2013noise} to identify all defective items. In particular, we find $d$ exactly. If it rejects ($d> d'$) we run $A$ to $\alpha$-estimate~$d$. 
\end{proof}
For the proof, we will need the following Chernoff bound
\begin{lemma} \label{Chernoff}
    Let $X_1,X_2,\ldots,X_t$ be independent random variables that takes values in $\{0,1\}$. Let $X=(X_1+X_2+\cdots+X_t)/t$ and $\mu\le E[X]$. Then for any $\Gamma\ge \mu$ we have
    $$\Pr[X\ge \Gamma]\le \left(\frac{e^{1-\frac{\mu}{\Gamma}}\mu}{\Gamma}\right)^{\Gamma t}\le \left(\frac{e\mu}{\Gamma}\right)^{\Gamma t}.$$
\end{lemma}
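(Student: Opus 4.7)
The plan is to apply the standard Chernoff exponential-moment method, followed by a one-parameter optimization. Let $Y := X_1 + \cdots + X_t$, $p_i := \mathbb{E}[X_i] \in [0,1]$, and $\mu' := \mathbb{E}[X] = (p_1 + \cdots + p_t)/t$. For any $s > 0$, Markov's inequality applied to the nonnegative variable $e^{sY}$ yields
\begin{align*}
\Pr[X \ge \Gamma] \;=\; \Pr\bigl[e^{sY} \ge e^{s\Gamma t}\bigr] \;\le\; e^{-s\Gamma t}\, \mathbb{E}\bigl[e^{sY}\bigr].
\end{align*}

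Next I would bound the moment generating function. By independence and the standard inequality $1+x \le e^x$ (applied termwise, legitimate since $e^s-1 \ge 0$),
\begin{align*}
\mathbb{E}\bigl[e^{sY}\bigr] \;=\; \prod_{i=1}^{t} \bigl(1 + p_i(e^s - 1)\bigr) \;\le\; \exp\!\left((e^s - 1)\sum_{i=1}^t p_i\right) \;=\; \exp\bigl(t \mu' (e^s - 1)\bigr).
\end{align*}
Using the hypothesis relating $\mu$ and $\mu'$, together with the fact that $e^s - 1 \ge 0$, one may then replace $\mu'$ by $\mu$ inside the exponent to obtain $\Pr[X \ge \Gamma] \le \exp\bigl(-s\Gamma t + t\mu(e^s - 1)\bigr)$.

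Now I would optimize in $s$. Differentiating the exponent in $s$ and setting the result to zero gives the critical point $s_\star = \log(\Gamma/\mu)$, which is nonnegative exactly because $\Gamma \ge \mu$. Substituting $s = s_\star$,
\begin{align*}
-s_\star \Gamma t + t\mu\bigl(e^{s_\star} - 1\bigr) \;=\; -\Gamma t \log(\Gamma/\mu) + (\Gamma - \mu)\, t \;=\; \Gamma t \bigl(\log(\mu/\Gamma) + 1 - \mu/\Gamma\bigr),
\end{align*}
and exponentiating gives exactly $\bigl(\tfrac{\mu}{\Gamma}\,e^{1-\mu/\Gamma}\bigr)^{\Gamma t}$, which is the first claimed inequality. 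The second inequality is then immediate: because $\mu/\Gamma \ge 0$, we have $e^{1-\mu/\Gamma} \le e$, so the bound simplifies to $(e\mu/\Gamma)^{\Gamma t}$.

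The argument is entirely routine, so there is no serious obstacle. The only subtlety is the direction of the comparison between $\mu$ and $\mathbb{E}[X]$ in the step where one passes from $\mu'$ to $\mu$ inside the exponent: since the displayed bound is monotone in $\mu$ on the range $(0,\Gamma]$, the hypothesis must be oriented so that this substitution goes in the weakening direction. Everything else is a textbook computation.
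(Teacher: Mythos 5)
The paper states this lemma without proof, so there is no internal argument to compare against; your exponential-moment derivation is the standard one and the optimization of $s$ is carried out correctly. You were right to flag the direction of the hypothesis, but you should have pressed further: the lemma as printed, with $\mu \le \mathbb{E}[X]$, is actually \emph{false}, not merely awkwardly stated. Your own observation makes this precise — the function $\nu \mapsto \bigl(e^{1-\nu/\Gamma}\nu/\Gamma\bigr)^{\Gamma t}$ is increasing on $(0,\Gamma]$, so replacing $\mu'=\mathbb{E}[X]$ by a \emph{smaller} $\mu$ shrinks the right-hand side and cannot preserve the inequality. Concretely, with $t=1$, $X_1\sim\mathrm{Bern}(1/2)$, $\mu=10^{-3}$, $\Gamma=1/2$, the claimed bound is $(e\mu/\Gamma)^{1/2}\approx 0.07$, yet $\Pr[X\ge\Gamma]=1/2$. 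The intended hypothesis is $\mu \ge \mathbb{E}[X]$ (and the application in Claim~\ref{Claim2} uses $\mu=\mathbb{E}[X]$ exactly, so nothing downstream is affected). With that correction, your step ``replace $\mu'$ by $\mu$'' becomes $t\mu'(e^s-1)\le t\mu(e^s-1)$, a genuine weakening, and the rest of your computation — the critical point $s_\star=\log(\Gamma/\mu)\ge 0$, the algebra giving $\bigl(\tfrac{\mu}{\Gamma}e^{1-\mu/\Gamma}\bigr)^{\Gamma t}$, and the crude bound $e^{1-\mu/\Gamma}\le e$ — is correct and complete. In short: right method, right computation, but the ``subtlety'' you noted is a bug in the statement rather than a detail to trust, and a careful write-up should say so.
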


We will also need the following analytic inequality.
\begin{lemma}\label{Ec}
    For every $c\ge 1$ and $x\ge 2$ we have
    $$0\le e^{-1/c}-\left(1-\frac{1}{cx}\right)^x\le \frac{A}{x},$$ where $A=5e^{-1/c}/c^2$.
\end{lemma}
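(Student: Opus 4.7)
The plan is to establish the two inequalities by passing to logarithms and using the Taylor expansion of $\ln(1-y)$. The lower bound $\left(1-\frac{1}{cx}\right)^x \le e^{-1/c}$ follows directly from the elementary inequality $1-y \le e^{-y}$: apply it with $y=1/(cx)$ and raise both sides to the $x$-th power.

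For the upper bound, I would introduce the nonnegative quantity
\[
u := -x\ln\!\left(1-\frac{1}{cx}\right) - \frac{1}{c},
\]
so that $\left(1-\frac{1}{cx}\right)^x = e^{-1/c - u}$. Since $0 < 1/(cx) \le 1/2$ under the hypotheses $c\ge 1$ and $x\ge 2$, the series $-\ln(1-y)=\sum_{k\ge 1} y^k/k$ converges absolutely at $y = 1/(cx)$, and after the $k=1$ term cancels with $1/c$ one obtains
\[
u = \sum_{k=2}^{\infty} \frac{1}{k\, c^k\, x^{k-1}} = \frac{1}{2c^2 x} + \sum_{k=3}^{\infty} \frac{1}{k\, c^k\, x^{k-1}}.
\]
Bounding $1/k \le 1/3$ for $k\ge 3$ and summing the resulting geometric series of ratio $1/(cx) \le 1/2$ gives
\[
\sum_{k=3}^{\infty} \frac{1}{k\, c^k\, x^{k-1}} \le \frac{1}{3c^3 x^2}\cdot\frac{1}{1-1/(cx)} \le \frac{2}{3c^3 x^2} \le \frac{1}{3c^2 x},
\]
where the last step uses $cx\ge 2$. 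Adding the two pieces yields $u \le \frac{5}{6c^2 x}$.

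To finish, I would combine this with the elementary bound $1-e^{-u} \le u$ for $u\ge 0$:
\[
e^{-1/c} - \left(1-\frac{1}{cx}\right)^x = e^{-1/c}\bigl(1-e^{-u}\bigr) \le e^{-1/c}\, u \le \frac{5 e^{-1/c}}{6c^2 x} \le \frac{A}{x}.
\]
There is no genuine analytic obstacle here; the whole task is constant-tracking. The delicate point is choosing where to truncate the Taylor series (at $k=2$) and making sure the hypotheses $c\ge 1$, $x\ge 2$ are invoked precisely where needed — they enter twice, first to guarantee convergence of the series, then to turn the $1/(c^3 x^2)$ tail into a $1/(c^2 x)$ bound with a small enough constant to land comfortably below the stated constant $5$.
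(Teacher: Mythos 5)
Your proof is correct, and the route is genuinely different from the paper's. The paper works directly with $f(x)=(1-\tfrac{1}{cx})^x$: it first shows $f$ is increasing with limit $e^{-1/c}$ (yielding the lower bound), then bounds the dyadic increments $f(2x)-f(x)\le A/(2x)$ via the mean value theorem combined with the inequalities $\ln(1-y)\le -y+y^2/2$ and $\tfrac{1}{z-1}\le \tfrac1z+\tfrac2{z^2}$, and finally telescopes $e^{-1/c}-f(x)=\sum_{i\ge1}\bigl(f(2^ix)-f(2^{i-1}x)\bigr)\le\sum_{i\ge1}A/(2^ix)=A/x$. You instead isolate the logarithmic defect $u=-x\ln(1-\tfrac1{cx})-\tfrac1c$, evaluate it exactly as the Taylor tail $\sum_{k\ge2}\tfrac{1}{k c^k x^{k-1}}$, bound it by $\tfrac{5}{6c^2x}$ using $cx\ge2$, and transfer back via $1-e^{-u}\le u$. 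Your argument is more direct — no MVT, no telescoping — and it also shows the lemma holds with the sharper constant $A/6$ in place of $A$; the two small points to keep visible are that $cx\ge2$ is needed both for convergence of the geometric tail ratio $1/(cx)\le 1/2$ and to absorb the $1/(c^3x^2)$ term into a $1/(c^2x)$ bound, and that $1-1/(cx)>0$ (again from $cx\ge2$) so that the logarithm is defined. Both of these you handle correctly.
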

\begin{proof} Let $f(x)=(1-{1}/({cx}))^x$. Then $\lim_{x\to\infty}f(x)=e^{-1/c}$ and $$f'(x)=f(x)\left(\ln(1-\frac{1}{cx})+\frac{1}{cx-1}\right)=f(x)\sum_{i=2}^\infty \frac{i-1}{i(cx)^i} >0.$$ Therefore, $f(x)$ is a strictly monotone increasing function and $0\le e^{-1/c}-f(x)$.
    By the mean value theorem, there exists $\xi \in [1,2]$ for which 
    $$f(2x)-f(x)= f'(\xi x)x=f(\xi x)\left(\ln\left(1-\frac{1}{\xi cx}\right)+\frac{1}{\xi cx-1}\right)x.$$
    We now use the inequalities $\ln(1-y)\le -y+y^2/2$ and $1/(z-1)\le 1/z+2/z^2$ for any $y<1$ and $z>2$ and the fact that $f(x)=(1-1/(cx))^x$ is a strictly monotone increasing function and get
    $$f(2x)-f(x)\le f(\xi x)\left(-\frac{1}{\xi cx}+\frac{1}{2\xi^2c^2x^2}+\frac{1}{\xi cx}+\frac{2}{\xi^2c^2x^2}\right)x\le \frac{2.5f(2x)}{c^2 x}\le \frac{2.5e^{-1/c}}{c^2x}=\frac{A}{2x}.$$
    Therefore 
    $$e^{-1/c}-f(x)=\lim_{n\to\infty} f(2^nx)-f(x)= \lim_{n\to\infty} \sum_{i=1}^n f(2^ix)-f(2^{i-1}x)\le  \lim_{n\to\infty}\sum_{i=1}^n \frac{A}{2^ix}= \frac{A}{x}.$$
\end{proof}

\subsection{Proof of the Theorem}
In this section, we prove Theorem~\ref{thm:thresholdUpper}.
\begin{proof}

We will begin by presenting the proof for the case of $\lambda=1$ and any $\alpha$. The proof concept in~\cite{bshouty2019lower} for $\lambda=1$ and any constant $\alpha$ relies on the following facts:  

    \begin{enumerate}
        \item\label{I1} $P(d,p)$ is a strictly monotone increasing function in $p$.
        \item\label{I2} For some constant $1 \le c =O_{d}(1)$ we have $\Delta(\alpha):=P(d,1/(cd))-P(d,\alpha^{-1/2}/(cd))=O_d(1)$. That is, $\Delta(\alpha)$ is greater than some constant that is independent of $d$ (and $n$).
        \item\label{I25} For every constant $0<\beta=O_d(1)$ there is a constant $b=O_d(1)$ such that for every $d\ge b$ we have $|P(d,1/(cd))-\lim_{x\to\infty}P(x,1/(cx))|\le \beta.$
        \item\label{I3} There is a constant $1>c'=O_d(1)$ such that for every $0\le x\le 1/cd$ we have $P(d,x)\le c'P(d,\alpha^{1/4}x)$. 
    \end{enumerate}

The algorithm's core idea is to estimate $P(d,p)$ at specific geometric progression points  $p=p_i:=\alpha^{i/4}/U$ for $i=0,1,\ldots,4\log_\alpha(U/L)$. We then select the first $i_1$ such that $P(d,p_{i_1})$ closely approximates $P(d,1/(cd))$ and employ $p_{i_1}$ to estimate the value of $d$. Each estimation is constrained to use at most $O(\log(1/\delta))$ queries. To achieve this, we rely on the condition that $P(d,1/(cd))-P(d,1/(\alpha^{1/2}cd))$ is constant (item~\ref{I1}). Since we lack knowledge of the true value of $d$, and therefore of $P(d,1/(cd))$, we substitute it with $P(d',1/(cd'))$, which is in proximity (see (\ref{DeltaEight}) below that follows from item~\ref{I25}). The additional condition in item~\ref{I3} ensures that the initial $i_1$ estimations can be performed with probability at least $1-\delta$ and within the query limit of $O(\log(1/\delta))$ for each estimation.

Using item~\ref{I25} with $\beta=\min(\Delta(\alpha),2)/16$, there is a constant $d'$ such that for every $d\ge d'$ we have
\begin{eqnarray*}
    |P(d,1/(cd))-\lim_{x\to\infty}P(x,1/(cx))|\le \Delta(\alpha)/16.
\end{eqnarray*} Therefore, for every $d\ge d'$ we have
\begin{eqnarray}\label{DeltaEight}
    |P(d',1/(cd'))-P(d,1/(cd))|\le \Delta(\alpha)/8.
\end{eqnarray}

Now, let's proceed with presenting the algorithm and its analysis for estimating $d$, assuming $d$ is greater than a sufficiently large constant $d'$. The result will then follow by applying Lemma~\ref{Asump_d}.
   
The algorithm: Estimate $P(d,\alpha^{i/4}/U)$ for all $i=0,1,\ldots, 4\log_\alpha(U/L)$, each with an additive error of at most $\Delta(\alpha)/8$ using $O(\log(1/\delta))$ queries (for each $i$)\footnote{Claim~\ref{Claim2} shows that this is possible for the first $i_1$ elements of $P(d,\alpha^{i/4}/L)$.}. Select the first $i_1$ for which the estimated value of $P(d,\alpha^{i_1/4}/L)$ is greater than $P(d',1/(cd'))-\Delta(\alpha)/4$ and return $D=U/(c\alpha^{(i_1-1)/4})$.

\begin{algorithm}
\caption{Estimation of $d$}\label{alg:estimate_d}
\KwData{Parameters $\alpha, L, U, \delta, c, d'$ and $\Delta(\alpha)$.}
\KwResult{Estimated value $d\le D\le \alpha d$.}
For $i=0,1,\ldots, 4\log_\alpha(U/L)$\; 
\hskip .2in Estimate $P(d,\alpha^{i/4}/U)$ with an additive error of at most $\Delta(\alpha)/8$ using $O(\log(1/\delta))$ queries\;
Select the first $i_1$ such that the estimated value of $P(d,\alpha^{i_1/4}/U)$ is greater than $P(d',1/(cd'))-\Delta(\alpha)/4$\;
\Return $D=U/(c\alpha^{(i_1-1)/4})$\;
\end{algorithm}

Now, assuming that items~\ref{I1}-\ref{I3} hold, we will proceed to establish the correctness of the algorithm. We prove 

\begin{claim}
   If the first $i_1$ estimations of $P(d,\alpha^{i/4}/U)$ are correct, then $D\in [d, \alpha d]$.
\end{claim}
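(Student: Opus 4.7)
The natural translation is to rewrite the conclusion $D = U/(c\alpha^{(i_1-1)/4}) = 1/(cp_{i_1-1})$ where $p_i := \alpha^{i/4}/U$. Then $d \le D \le \alpha d$ is equivalent to the two-sided bound $1/(\alpha c d) \le p_{i_1-1} \le 1/(cd)$. Since consecutive grid points differ by a factor $\alpha^{1/4}$, this window contains enough room to be pinned down by the algorithm's threshold test, provided we chain inequalities carefully. Throughout, let $T := P(d',1/(cd'))-\Delta(\alpha)/4$ be the selection threshold, and let $\widetilde{P}_i$ denote the estimate of $P(d,p_i)$, so $|\widetilde{P}_i - P(d,p_i)|\le \Delta(\alpha)/8$ by hypothesis.

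First I would establish the \emph{upper} bound $p_{i_1-1}\le 1/(cd)$. Since $i_1$ is the \emph{first} selected index, we have $\widetilde{P}_{i_1-1}\le T$, hence $P(d,p_{i_1-1}) \le T + \Delta(\alpha)/8 = P(d',1/(cd'))-\Delta(\alpha)/8$. Applying (\ref{DeltaEight}) yields $P(d,p_{i_1-1}) \le P(d,1/(cd))$, and the desired inequality $p_{i_1-1}\le 1/(cd)$ then follows from the monotonicity of $P(d,\cdot)$ (item~\ref{I1}).

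Next I would establish the \emph{lower} bound $p_{i_1-1} > 1/(\alpha c d)$. From $\widetilde{P}_{i_1}>T$ and the accuracy of the estimate, $P(d,p_{i_1}) > T - \Delta(\alpha)/8 = P(d',1/(cd'))-3\Delta(\alpha)/8$. Combining with (\ref{DeltaEight}) gives $P(d,p_{i_1}) > P(d,1/(cd))-\Delta(\alpha)/2$, which by the definition $\Delta(\alpha)=P(d,1/(cd))-P(d,\alpha^{-1/2}/(cd))$ is strictly greater than $P(d,1/(\alpha^{1/2}cd))$. Monotonicity then yields $p_{i_1} > 1/(\alpha^{1/2}cd)$, and using $p_{i_1-1}=p_{i_1}/\alpha^{1/4}$ produces $p_{i_1-1} > 1/(\alpha^{3/4} cd) \ge 1/(\alpha c d)$. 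Combining the two bounds gives $p_{i_1-1}\in (1/(\alpha cd),1/(cd)]$, i.e.\ $D\in [d,\alpha d)$, completing the claim.

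The step that requires the most care is bookkeeping the slack budget: the algorithm has three independent sources of error (the estimation tolerance $\Delta(\alpha)/8$, the substitution of $P(d',1/(cd'))$ for $P(d,1/(cd))$ via (\ref{DeltaEight}), and the grid discretization factor $\alpha^{1/4}$), and these must be balanced against $\Delta(\alpha)$ so that the threshold $T$ cleanly separates the two regimes $p\le 1/(\alpha^{1/2}cd)$ (where we want $i$ not selected) and $p\ge 1/(cd)$ (where we want $i$ selected). One should also verify that $i_1$ is actually defined under the promise $d\le U$, which follows because $P(d,1/L)$ comfortably exceeds $T$ when $cd\ge L$; the corresponding boundary situation $i_1=0$ can be handled either by extending the grid below $p_0$ or by noting that the same two chains of inequalities above apply verbatim to the extrapolated value $p_{-1}=\alpha^{-1/4}/U$.
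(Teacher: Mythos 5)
Your proposal is correct and follows essentially the same strategy as the paper: both proofs obtain $D<\alpha d$ by converting the fact that the estimate at index $i_1$ exceeds the threshold $T$ into the chain $P(d,p_{i_1})>P(d',1/(cd'))-3\Delta(\alpha)/8\ge P(d,1/(cd))-\Delta(\alpha)/2>P(d,\alpha^{-1/2}/(cd))$ and then invoking monotonicity (item~\ref{I1}). The one genuine (if small) difference is in the $D\ge d$ direction: the paper introduces an auxiliary index $i_2$ satisfying $1/(cd)<p_{i_2}\le\alpha^{1/4}/(cd)$, shows the estimate at $i_2$ clears $T$ and hence $i_1\le i_2$, then pushes $p_{i_1-1}\le 1/(cd)$ through the grid spacing; you instead read off $\widetilde P_{i_1-1}\le T$ directly from the minimality of $i_1$ and deduce $P(d,p_{i_1-1})\le P(d,1/(cd))$ in one step. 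Your variant is slightly cleaner and avoids having to argue that a suitable $i_2$ lies inside the grid, but it does require $i_1\ge 1$; you correctly flag this $i_1=0$ boundary case, which the paper leaves implicit (and which is mirrored in the paper's own need for $i_2$ to exist in the grid). In either formulation the boundary case ultimately needs the grid to extend one step below $1/U$, or an equivalent normalization of $c$, so neither proof is more complete than the other on this point.
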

\begin{proof} Suppose the first $i_1$ estimations are correct.
  Since the estimation of $P(d,\alpha^{i_1/4}/U)$ is greater than $P(d',1/(cd'))-\Delta(\alpha)/4$, we have that $P(d,\alpha^{i_1/4}/U)$ is greater than $P(d',1/(cd'))-3\Delta(\alpha)/8$. 
  Since by (\ref{DeltaEight}), $|P(d',1/(cd'))- P(d,1/(cd))|\le \Delta(\alpha)/8$,  we have that $P(d,\alpha^{i_1/4}/U)>P(d,1/(cd))-\Delta(\alpha)/2$. 
  By item~\ref{I2}, $P(d,\alpha^{i_1/4}/U)>P(d,\alpha^{-1/2}/(cd))$ and therefore by item (\ref{I1}), $\alpha^{i_1/4}/U>\alpha^{-1/2}/(cd)$ and $D=U/(c\alpha^{(i_1-1)/4})<\alpha^{3/4}d\le \alpha d$. 

  If $i_2$ satisfies $\alpha^{1/4}/(cd)\ge \alpha^{i_2/4}/U>1/(cd)$ then by  (\ref{DeltaEight}) and item (\ref{I1}), $P(d,\alpha^{i_2/4}/U)>P(d,1/(cd))\ge P(d',1/(cd'))-\Delta(\alpha)/8$. The estimation of $P(d,\alpha^{i_2/4}/U)$ is greater than $P(d',1/(cd'))-\Delta(\alpha)/4$. Therefore $i_1\le i_2$ and $\alpha^{i_1/4}/U\le \alpha^{i_2/4}/U\le \alpha^{1/4}/(cd)$. Thus $D=U/(c\alpha^{(i_1-1)/4})\ge d$.
\end{proof}
The estimation of $P(d,\alpha^{i/4}/U)$ can be accomplished using a Chernoff bound, with each estimation requiring $O(\log (1/\delta))$ queries. We need item~\ref{I3} to show that, with probability at least $1-\delta$, all the estimations up to~$i_1$ are correct. 

\begin{claim}\label{Claim2}
  The probability that all the estimations of $P(d,\alpha^{i/4}/U)$ for $i\le i_1$ have an additive error of at most $\Delta(\alpha)/8$ is at least $1-\delta$.
\end{claim}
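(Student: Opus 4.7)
The central issue is that $i_1$ is a random index, so a naive union bound over all $4\log_\alpha(U/L)$ estimations would inflate the failure probability by a $\log_\alpha(U/L)$ factor. I would first replace the random $i_1$ by a deterministic cap $I$, defined as the smallest index satisfying $P(d, \alpha^{I/4}/U) > P(d', 1/(cd')) - \Delta(\alpha)/8$. By (\ref{DeltaEight}) and item~\ref{I2}, $I$ exists and $P(d, \alpha^{I/4}/U)$ is a constant bounded away from $0$ and $1$. The key observation is: if every $\hat P_i$ with $i \le I$ has additive error at most $\Delta(\alpha)/8$, then $\hat P_I > P(d', 1/(cd')) - \Delta(\alpha)/4$, forcing $i_1 \le I$. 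Hence the event ``some $\hat P_i$ with $i \le i_1$ is bad'' is contained in the deterministic event ``some $\hat P_i$ with $i \le I$ is bad'', which I can bound via union bound over $\{0, 1, \ldots, I\}$.

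Next I would exploit item~\ref{I3} to argue that $P(d, \alpha^{i/4}/U)$ decays geometrically as $i$ decreases from $I$. Writing $p_i = \alpha^{i/4}/U$, since $P(d, p_I)$ is a positive constant bounded away from $1$ and $P(d, p) \le pd$, one checks that $p_I = \Theta(1/d)$, so there is an absolute constant $j_0$ such that $p_{I-j} \le 1/(cd)$ for every $j \ge j_0$. Iterating item~\ref{I3} then gives $P(d, p_{I-j}) \le (c')^{j - j_0}$ for $j \ge j_0$. For the $O(1)$ indices with $j < j_0$, a standard Chernoff/Hoeffding bound using $t = O(\log(1/\delta))$ queries yields per-estimate failure $O(\delta)$, summing to $O(\delta)$.

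For $j \ge j_0$, I would apply Lemma~\ref{Chernoff} with $\mu = P(d, p_{I-j})$ and $\Gamma = \mu + \Delta(\alpha)/8 \ge \Delta(\alpha)/8$ to obtain
\[
\Pr\!\left[\hat P_{I-j} \ge P(d, p_{I-j}) + \tfrac{\Delta(\alpha)}{8}\right] \le \left(\frac{e\mu}{\Gamma}\right)^{\!\Gamma t} \le \left(\frac{8e\,(c')^{j-j_0}}{\Delta(\alpha)}\right)^{\!(\Delta(\alpha)/8)\, t}.
\]
Choosing $t = C \log(1/\delta)$ with $C$ a sufficiently large constant depending only on $c'$ and $\Delta(\alpha)$, the base drops below $1/2$ once $j$ exceeds a further constant threshold, and the resulting bound is of the form $\delta^{1+\Omega(j)}$, whose sum over $j$ is $O(\delta)$. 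The underestimate event $\{\hat P_{I-j} \le P(d, p_{I-j}) - \Delta(\alpha)/8\}$ is empty whenever $P(d, p_{I-j}) < \Delta(\alpha)/8$ since $\hat P_{I-j} \ge 0$, so it only needs to be controlled for the $O(1)$ indices near $I$, via a standard Chernoff bound with $t = O(\log(1/\delta))$ queries. Rescaling $\delta$ by a constant completes the argument.

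The main obstacle is threading the Chernoff estimate so that the exponential base in Lemma~\ref{Chernoff} becomes strictly less than $1$ once $j$ passes a constant threshold and then shrinks geometrically in $j$; this is precisely where item~\ref{I3} is essential, as it turns the deterministic cardinality bound $I = O(\log_\alpha(U/L))$ into a convergent geometric series of failure probabilities, allowing $O(\log(1/\delta))$ queries per estimate to suffice.
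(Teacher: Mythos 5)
Your proof is correct and follows the same architecture as the paper's: use item~\ref{I3} to obtain geometric decay of $P(d,\alpha^{i/4}/U)$ as $i$ decreases, apply Lemma~\ref{Chernoff} so the per-index failure probability itself decays geometrically, and sum to get $O(\delta)$. The minor differences in bookkeeping (you set $\Gamma = \mu + \Delta(\alpha)/8$ rather than $\Gamma = \Delta(\alpha)/8$, and your $j_0$ is defined by the constraint $p_{I-j}\le 1/(cd)$ rather than $P(d,p_{I-j_0})\le \Delta(\alpha)/32$) are cosmetic. The observation that the underestimate event is vacuous once $P(d,p_{I-j})<\Delta(\alpha)/8$ matches what the paper uses implicitly.

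The one genuine refinement in your version is the introduction of the deterministic cap $I$. The paper's proof reasons directly about indices of the form $i_1-j$, even though $i_1$ is a random variable determined by the very estimates whose error we are trying to control; in particular, the claim that item~\ref{I3} can be iterated downward from $i_1$ tacitly uses $p_{i_1}\le \alpha^{1/4}/(cd)$, which is only guaranteed \emph{after} assuming the estimates are correct. Your move to a deterministic $I$ (defined via the true values $P(d,p_i)$ rather than the estimates $\hat P_i$), together with the containment ``some bad estimate at $i\le i_1$'' $\subseteq$ ``some bad estimate at $i\le I$'', cleanly breaks this circularity and reduces the problem to a union bound over a fixed index set. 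This is a welcome clarification of a step the paper leaves implicit, and it buys you a proof that is logically self-contained without appealing to the conclusion of the preceding correctness claim.
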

\begin{proof}
    Since by item~\ref{I3}, $P(d,x)\le c'P(d,\alpha^{1/4}x)$, $c'<1$ and $P(d,\alpha^{i_1/4}/U)\le 1$, there is a constant $j_0$ such that $P(d,\alpha^{(i_1-j_0)/4}/U)\le \Delta(\alpha)/32$.
    Since $j_0$ and $\Delta(\alpha)/8$ are constants, we can estimate all $P(d,\alpha^{(i_1-j)/4}/L)$, $j=0,1,\ldots,j_0$ with additive error of at most $\Delta(\alpha)/8$ and probability at least $1-\delta/2$ with $O(\log (1/\delta))$ queries.

    Since $P(d,\alpha^{(i_1-j)/4}/U)\le \Delta(\alpha)/32$, by item~\ref{I3}, we have $P(d,\alpha^{(i_1-j-i)/4}/U)\le c'^i \Delta(\alpha)/32$, $i=1,2,\ldots,i_1-j$.
     By Lemma~\ref{Chernoff}, the probability that the estimation of $P(d,\alpha^{(i_1-j-i)/4}/U)$ has additive error greater than $\Delta(\alpha)/8$ is at most  $$\left(\frac{eP(d,\alpha^{(i_1-j-i)/4}/U)}{\Delta(\alpha)/8}\right)^{(\Delta(\alpha)/8)O(\log(1/\delta))}\le \left(\frac{ec'^i}{4}\right)^{O(\log(1/\delta))}=\left(\frac{\delta}{4}\right)^{i+1}.$$
    
    The probability that  one of the estimations of $P(d,1/U), P(d,\alpha^{1/4}/U),\ldots,$ $P(d,\alpha^{(i_1-j_0)/4}/U)$ has additive error greater than $\Delta(\alpha)/8$ is at most
    \begin{eqnarray*}
        \sum_{i=0}^{i_1-j_0} \left(\frac{\delta}{4} \right)^{i+1}\le  \frac{\delta}{2}.
    \end{eqnarray*}

\end{proof}

We now prove items~\ref{I1}-\ref{I3}.
Item~\ref{I1} is clear.

Since $P$ is a strictly monotone increasing function, if items~\ref{I2}-\ref{I3} hold for $\alpha=\alpha'$, they also 
hold for any $\alpha\ge \alpha'$. Therefore, it is sufficient to prove that items~\ref{I2}-\ref{I3} hold for any constant $1<\alpha\le 2$. 

Since $P(d,p)=1-(1-p)^d$, by the mean value theorem, Lemma~\ref{Ec} and since $\partial P(d,p)/\partial p=d(1-p)^{d-1}$ there is $1/(cd)\ge \eta\ge \alpha^{-1/2}/(cd)$ such that
\begin{eqnarray*}
  \Delta(\alpha)&=&P(d,1/(cd))-P(d,\alpha^{-1/2}/(cd))\\ &=&d(1-\eta)^{d-1}(1/(cd)-\alpha^{-{1/2}}/(cd))\\
    &\ge&\left(1-\frac{1}{cd}\right)^{d-1}(1-\alpha^{-1/2})/c\\
    &\ge& \left(1-\frac{1}{c(d-1)}\right)^{d-1}(1-\alpha^{-1/2})/c \\
    &\ge& (e^{-1/c}-A/(d-1))(1-\alpha^{-1/2})/c =O_d(1).
\end{eqnarray*}
This implies item~\ref{I2} for any constant $c$. 

We now prove item~\ref{I25}. Given a constant $\beta$, let $b=\lceil A/\beta\rceil$ where $A$ is the constant in Lemma~\ref{Ec}. By Lemma~\ref{Ec}, we have $$0\le P(b,1/(cb))-(1-e^{-1/c})\le \frac{A}{b}\le \beta.$$ 

We will now prove item~\ref{I3}. Let $c=\alpha$. Since for $0\le x\le 1$ we have $1-dx\le (1-x)^d\le 1-dx+d^2x^2/2$, it follows that
$P(d,x)=1-(1-x)^d\le dx$. Since $0\le x\le 1/cd$ and $c=\alpha$, we have $$P(d,\alpha^{1/4}x)=1-(1-\alpha^{1/4}x)^d\ge \alpha^{1/4}dx-d^2\alpha^{1/2}x^2/2\ge (\alpha^{1/4}-\alpha^{-1/2})dx\ge (\alpha^{1/4}-\alpha^{-1/2})P(d,x).$$

This completes the proof for $\lambda=1$ and any $\alpha=1+\Omega(1)$. 
Now, we will extend the proof to cover any constant $\lambda>1$ and $\alpha=1+\Omega(1)$.

It is easy to verify that
$$\frac{\partial P_\lambda(d,p)}{\partial p}={d\choose \lambda-1}(d-\lambda+1)p^{\lambda-1}(1-p)^{d-\lambda}.$$
To get the result, we show
\begin{enumerate}[label=L\arabic*]        \item\label{IL1}\hskip -0.08in . $P_\lambda(d,p)$ is a strictly monotone increasing function in $p$.
        \item\label{IL2}\hskip -0.08in . For some constant $c\ge 1$, we have $\Delta_\lambda(\alpha):=P_\lambda(d,\lambda/(cd))-P_\lambda(d,\alpha^{-1/2}\lambda/(cd))=O_d(1)$.
        \item\label{IL25}\hskip -0.08in . For every constant $\beta$ there is a constant $d'$ such that for every $d\ge d'$ we have $|P_\lambda(d,\lambda/(cd))-\lim_{x\to\infty}P_\lambda(x,\lambda/(cx))|\le \beta.$
        \item\label{IL3}\hskip -0.08in . There is a constant $c'<1$ such that for every $0\le x\le \lambda/cd$ we have $P_\lambda(d,x)\le c'P_\lambda(d,\alpha^{1/4}x)$.
    \end{enumerate}
The algorithm and its correctness are the same as the case of $\lambda=1$. Simply add $\lambda$ as a subscript to $P$ and replace $c$ with $c/\lambda$. So we only need to prove items \ref{IL1}-\ref{IL3} for $\lambda>1$.

Item \ref{IL1} follows because $\partial P_\lambda(d,p)/\partial p>0$ for all $0< p\le 1$. Since $P_\lambda$ is a strictly monotone increasing function, if items~\ref{IL2}-\ref{IL3} hold for $\alpha=\alpha'$, then they also 
hold for any $\alpha\ge \alpha'$. Therefore, it is enough to prove items~\ref{IL2} and~\ref{IL3} for any constant $1<\alpha\le 2$. 

Let 
$$c=\frac{2\lambda}{1-\alpha^{-1/4}}.$$
By the mean value theorem, there is $1/\alpha^{1/2}\le \eta\le 1$ such that 
\begin{eqnarray*}
    \Delta_\lambda(\alpha)&=& P_\lambda(d,\lambda/(cd))-P_\lambda(d,\lambda/(\alpha^{1/2}cd))\\
    &=& {d\choose \lambda-1} (d-\lambda+1)\left(\frac{\eta\lambda}{cd}\right)^\lambda\left(1-\frac{\eta\lambda}{cd}\right)^{d-\lambda}\left(\frac{\lambda}{cd}-\frac{\lambda}{\alpha^{1/2}cd}\right)\\
    &=&\frac{d(d-1)\cdots(d-\lambda+1)}{d^{\lambda+1}} \frac{\lambda^{\lambda+1}}{(\lambda-1)!}\frac{\eta^\lambda}{c^{\lambda+1}} \left(1-\frac{\eta\lambda}{cd}\right)^{d-\lambda} (1-\alpha^{-1/2})\\
    &\ge& 2^{\lambda+1}\frac{\lambda^{\lambda+1}}{(\lambda-1)!}\frac{\eta^\lambda}{c^{\lambda+1}}\frac{1}{2}(1-\alpha^{-1/2})
    \hskip 1in \lambda<d/2, c>2\lambda\\
    &=&O_d(1).
\end{eqnarray*}
This proves item~\ref{IL2}.

We now prove item~\ref{IL25}. First, since $\lambda$ is constant,
$$\lim_{x\to\infty}P_\lambda(x,\lambda/(cx))=1-\sum_{i=0}^{\lambda-1} \left(\frac{\lambda}{c}\right)^i\frac{e^{-\lambda/c}}{i!}.$$
For the proof we will use the following inequalities: For any $i<\lambda$ we have
$$1\ge \frac{d(d-1)(d-2)\cdots(d-i+1)}{d^i}\ge 1-\frac{\lambda^2}{2d}  \mbox{\ \ \ and\ \ \ } 1\ge \left(1-\frac{\lambda}{cd}\right)^i\ge 1-\frac{\lambda^2}{cd}.$$
Now, by Lemma~\ref{Ec} and the above inequalities,
\begin{eqnarray*}
 \left|P_\lambda(d,\lambda/cd)-\left(1-\sum_{i=0}^{\lambda-1} \left(\frac{\lambda}{c}\right)^i\frac{e^{-\lambda/c}}{i!}\right)\right|&=&\left|\sum_{i=0}^{\lambda-1}\left(\left(\frac{\lambda}{c}\right)^i\frac{e^{-\lambda/c}}{i!}-{d\choose i}\left(\frac{\lambda}{cd}\right)^i\left(1-\frac{\lambda}{cd}\right)^{d-i}\right)\right|\\
 &=& \left|\sum_{i=0}^{\lambda-1} \left(\left(\frac{\lambda}{c}\right)^i\frac{1}{i!}\left(e^{-\lambda/c}-\frac{\prod_{\ell=0}^{i-1}(d-\ell)}{d^i} \frac{\left(1-\frac{\lambda}{cd}\right)^d}{(1-\lambda/cd)^i}\right)\right)\right|\\
 &\le& \sum_{i=1}^{\lambda-1} \left(\frac{\lambda}{c}\right)^i\frac{1}{i!}
 \left|e^{-\lambda/c} -\left(1\pm O\left(\frac{1}{d}\right)\right)
 \left(1-\frac{\lambda}{cd}\right)^d\right| \\
 &\le& \sum_{i=1}^{\lambda-1} \left(\frac{\lambda}{c}\right)^i\frac{1}{i!}
 \left|e^{-\lambda/c} -\left(1-\frac{\lambda}{cd}\right)^d\pm O\left(\frac{1}{d}\right)\right| 
 \\ &=&O\left(\frac{1}{d}\right).
\end{eqnarray*}
This proves item~\ref{IL25}.

We now prove item~\ref{IL3}. For every $0\le j\le d-\lambda$ and $0\le x\le \lambda/cd$, we have
\begin{eqnarray*}
    \frac{{d\choose \lambda+j}x^{\lambda+j}(1-x)^{d-\lambda-j}}{{d\choose \lambda+j}(\alpha^{1/4}x)^{\lambda+j}(1-\alpha^{1/4}x)^{d-\lambda-j}}&\le&\frac{1}{\alpha^{\lambda/4}(1-\alpha^{1/4}x)^d}\\
    &\le& \frac{1}{\alpha^{\lambda/4}(1-\alpha^{1/4}xd)}\\
    &\le& \frac{1}{\alpha^{\lambda/4}(1-\alpha^{1/4}\lambda/c)}\\
    &\le& \frac{1}{\alpha^{(\lambda-1)/4}}.\hskip 1in c>\frac{\alpha^{1/4}\lambda}{1-\alpha^{-1/4}}
\end{eqnarray*}
Therefore, by~(\ref{PLam}), 
$$P_\lambda(d,x)=\sum_{i=\lambda}^{d} {d\choose i}x^i(1-x)^{d-i}\le\frac{1}{\alpha^{(\lambda-1)/4}} \sum_{i=\lambda}^{d} {d\choose i}(\alpha^{1/4}x)^i(1-\alpha^{1/4}x)^{d-i}\le \frac{1}{\alpha^{(\lambda-1)/4}}P_\lambda(d,\alpha^{1/4}x).$$
This implies item~\ref{IL3} and the result follows.
\end{proof}

\section{Exploring coupling: basic concepts}\label{CouplingBC}
In this appendix, we provide basic definitions and results for readers who may not be familiar with the coupling technique.  
To ensure completeness, we begin with fundamental concepts in probability.

Let $\cS$ be a set. We say that $\cF$ is {\it {$\sigma$}-algebra} on $\cS$ if $\cF\subseteq 2^\cS$ is a set of subsets of $\cS$ and $\cF$ is closed under complement, unions and countable intersections. 
The pair $(\cS,\cF)$ is called a {\it measurable space}. A {\it measure} on a measurable space $(\cS,\cF)$ is a function $\mu:\cF\to \mathbb{R}\cup\{-\infty,+\infty\}$ such that: (1) for every $F\in \cF$, $\mu(F)>0$. 
(2) $\mu(\emptyset)=0$ and (3) for any $\{F_i\}_{i\in \mathbb{N}}$, $F_i\in \cF$,
$\mu(\cup_{i=1}^\infty F_i)=\sum_{i=1}^\infty \mu(F_i)$. The tuple $(\cS,\cF,\mu)$ is called a {\it measure space}. If $\mu(\cS)=1$ then $\mu$ is called a {\it probability measure} and $(\cS,\cF,\mu)$ is called a {\it probability space}. 

Given two measurable spaces $(\cS_1,\cF_1)$ and $(\cS_2,\cF_2)$. A function $X:\cS_1\to \cS_2$ is a {\it  measurable function} if for every $F\in \cF_2$, $X^{-1}(F):=\{\omega\in \cS_1|X(\omega)\in F\}\in \cF_1$. When $(\cS_1,\cF_1,\mu)$ is a probability space, then $X$ is called a {\it random variable}. For such $X$, we say that $X$ is {\it defined on} $(\cS_1,\cF_1)$ and {\it takes values in} $(\cS_2,\cF_2)$.  
The {\it law of} $X$, denoted by $\mu_X$, is a probability measure on $(\cS_2,\cF_2)$ defined as: For every $F\in\cF_2$ $\mu_X(F)=\mu(X^{-1}(F))$.

Now, we give the formal definition of coupling.

\begin{definition}[Coupling]
    Let $\mu_1$ and $\mu_2$ be probability measures on the same measurable space $(\cS,\cF)$. A {\it coupling of $\mu_1$ and $\mu_2$} is a probability measure $\mu$ on the product space\footnote{Here $\cF\times \cF$ is the smallest $\sigma$-algebra that contains the Cartesian product of $\cF$ with itself.} $(\cS\times \cS,\cF\times \cF)$ such that for every $F\in \cF$, we have $\mu(F\times \cS)=\mu_1(F)$ and $\mu(\cS\times F)=\mu_2(F)$.  
\end{definition}

For two random variables $X$ and $Y$ taking values in $(\cS,\cF)$ (but not necessarily defined on the same probability space), a coupling of $X$ and $Y$ is a joint variable $(X',Y')$ taking values in $(\cS\times \cS,\cF\times \cF)$ where\footnote{$\mu_{(X',Y')}$ is the law of ${(X',Y')}$.} $\mu_{(X',Y')}$ is a coupling of $\mu_X$ and $\mu_Y$. We also say that $(X',Y')$ is a coupling of $\mu_1$ and $\mu_2$ if $\mu_{(X',Y')}$ is a coupling of of $\mu_1$ and $\mu_2$.

\subsection{Preliminary results}
In this section, we present two well-known results that will be used in the paper, along with their proofs for completeness.

\begin{lemma}\label{Composition}
    Let $(\cS,\cF,\mu)$ be a probability space, $(\cS_1,\cF_1)$ and $(\cS_2,\cF_2)$ measurable spaces, $X',Y':\cS\to\cS_1$ random variables, $X$ and $Y$ are random variables that take values in $(\cS_1,\cF_1)$ and $f:\cS_1\to\cS_2$ a measurable function such that $f(\cS_1)=\cS_2$. If $(X',Y')$ is a coupling of $X$ and $Y$ then $(f(X'),f(Y'))$ is a coupling of $f(X)$ and $f(Y)$.
\end{lemma}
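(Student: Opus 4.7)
The plan is to unfold both uses of the word ``coupling'' in terms of marginal laws (pushforwards) and then chase a short chain of equalities. First, I would rephrase the hypothesis into marginal form: by the definition given just before the lemma, $(X',Y')$ being a coupling of $X$ and $Y$ on $(\cS_1,\cF_1)$ is equivalent to the two identities
\[
\mu_{(X',Y')}(A\times\cS_1) = \mu_X(A), \qquad \mu_{(X',Y')}(\cS_1\times A) = \mu_Y(A) \qquad \text{for all } A\in\cF_1,
\]
which in turn is equivalent to $\mu_{X'}=\mu_X$ and $\mu_{Y'}=\mu_Y$ as probability measures on $(\cS_1,\cF_1)$.

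Next, I would verify the two marginal conditions that characterize $(f(X'),f(Y'))$ as a coupling of $f(X)$ and $f(Y)$. Fix an arbitrary $B\in\cF_2$; measurability of $f$ guarantees $f^{-1}(B)\in\cF_1$, so every pushforward below is well-defined. For the first marginal,
\[
\mu_{(f(X'),f(Y'))}(B\times\cS_2) \;=\; \mu\bigl((f\circ X')^{-1}(B)\bigr) \;=\; \mu\bigl(X'^{-1}(f^{-1}(B))\bigr) \;=\; \mu_{X'}(f^{-1}(B)).
\]
Applying the hypothesis $\mu_{X'}=\mu_X$ and unwinding the pushforward once more yields $\mu_{X'}(f^{-1}(B))=\mu_X(f^{-1}(B))=\mu_{f(X)}(B)$, which is the required identity. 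The computation for the second marginal is verbatim the same with $Y',Y$ in place of $X',X$, using $\mu_{Y'}=\mu_Y$.

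There is no real obstacle here; the lemma is essentially the statement that pushforwards commute with taking marginals, and the proof is pure bookkeeping with the definitions. The only hypothesis that actively enters the calculation is measurability of $f$ (to ensure $f^{-1}(B)\in\cF_1$). The surjectivity assumption $f(\cS_1)=\cS_2$ plays no role in the algebra and is included only to guarantee that $\mu_{f(X)}$ and $\mu_{f(Y)}$ are genuine probability measures on the full target space $(\cS_2,\cF_2)$ rather than being supported on a proper subset.
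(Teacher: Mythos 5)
Your argument is correct and is essentially the paper's own proof: both fix an arbitrary measurable set in the target, unwind the pushforwards using measurability of $f$ so that $f^{-1}$ sends $\cF_2$ into $\cF_1$, and invoke the coupling hypothesis to replace $\mu_{X'}$ with $\mu_X$ (and likewise for $Y$). Your preliminary rephrasing of the coupling hypothesis as $\mu_{X'}=\mu_X$ and $\mu_{Y'}=\mu_Y$, and your closing remark that surjectivity of $f$ is inert, are small cosmetic refinements but do not change the substance.
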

\begin{proof}
    It is clear that $f(X'),f(Y'):\cS\to \cS_2$ are random variables that take values in $(\cS_2,\cF_2)$ and $f(X)$ and $f(Y)$ take values in $(\cS_2,\cF_2)$. Now for any $F\in \cF_2$,
    \begin{eqnarray*}
        \mu_{(f(X'),f(Y'))}(F\times \cS_2)&=&\mu(\{\omega\in \cS|f(X'(\omega))\in F, f(Y'(\omega))\in \cS_2\})\\
        &=& \mu(\{\omega\in \cS|X'(\omega)\in f^{-1}(F), Y'(\omega)\in f^{-1}(S_2)\})\\
        &=&\mu_{(X',Y')}(f^{-1}(F)\times f^{-1}(\cS_2))
        \\
        &=&\mu_{(X',Y')}(f^{-1}(F)\times \cS_1)=\mu_X(f^{-1}(F))=\mu_{f(X)}(F).
    \end{eqnarray*}
    In the same way $\mu_{(f(X'),f(Y'))}(\cS_2\times F)=\mu_{f(Y)}(F)$.
\end{proof}

\begin{lemma}
    Let $\mu_1$ and $\mu_2$ be two probability measures on the measurable space $(\cS,\cF)$. For any coupling $(X,Y)$ of $\mu_1$ and $\mu_2$ we have
    \[
    \|\mu_1 - \mu_2\|_{\TV} \le \Pr([X \ne Y])
    \]
    where\footnote{Here we also assume that $[X\ne Y]\in \cF'$} $[X\ne Y]=\{\omega\in \cS'|X(\omega)\not= Y(\omega)\}$ and $X$ and $Y$ are random variables defined on the probability space $(\cS',\cF',\Pr)$ (and take values in $(\cS,\cF)$).
\end{lemma}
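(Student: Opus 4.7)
The plan is to mirror the proof already given for \Cref{couplingineq}, since the new lemma is essentially that result restated in the formal measure-theoretic language of this appendix. The only substantive addition is that one must invoke measurability of the events $[X = Y]$ and $[X \neq Y]$ in $\cF'$, which the footnote to the statement grants as a hypothesis.

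Fix $A \in \cF$. Since the coupling condition gives $\mu_1(A) = \Pr([X \in A])$ and $\mu_2(A) = \Pr([Y \in A])$, I would split each of these probabilities over the measurable partition $[X = Y] \sqcup [X \neq Y]$ of $\cS'$. On the event $[X = Y]$ the conditions $X \in A$ and $Y \in A$ coincide, so the two contributions from that piece cancel in the subtraction, leaving
\[
\mu_1(A) - \mu_2(A) = \Pr([X \in A,\ X \neq Y]) - \Pr([Y \in A,\ X \neq Y]) \leq \Pr([X \neq Y]).
\]
By the symmetric argument $\mu_2(A) - \mu_1(A) \leq \Pr([X \neq Y])$ as well, whence $|\mu_1(A) - \mu_2(A)| \leq \Pr([X \neq Y])$. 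Taking the supremum over $A \in \cF$ in the definition of total variation distance finishes the proof.

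The key step is the cancellation on the diagonal event, which is immediate once measurability of $[X = Y]$ is in hand. The only subtlety in a fully rigorous treatment is verifying that $[X = Y] = (X, Y)^{-1}(\Delta) \in \cF'$, where $\Delta \subseteq \cS \times \cS$ is the diagonal; the statement's footnote papers over this by assuming it outright, and otherwise one would need to impose a condition ensuring that $\Delta \in \cF \times \cF$ (e.g.\ when $\cS$ is a standard Borel space). No new ideas beyond those in \Cref{couplingineq} are required.
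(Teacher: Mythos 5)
Your proof is correct and mirrors the paper's own argument in this appendix: both split over the measurable partition $[X=Y]\sqcup[X\neq Y]$, use the cancellation on the diagonal (via $X^{-1}(A)\cap[X=Y]=Y^{-1}(A)\cap[X=Y]$), and take the supremum over $A\in\cF$. Your remark on measurability of the diagonal event correctly identifies the one point the footnote handles by assumption.
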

\begin{proof} 
Let $[X\ne Y]=\{\omega\in \cS|X(\omega)= Y(\omega)\}$. Since for any $B \in \cF$ we have $X^{-1}(B)\cap [X=Y]=Y^{-1}(B)\cap [X=Y]$,
    for any $A\in \cF$, we have
    \begin{align*}
    \mu_1(A) - \mu_2(A) &= \mu_{X\times Y}(A \times\cS) - \mu_{X\times Y}(\cS\times A) \\
    &=\Pr(\{\omega\in S'|X(\omega)\in A, Y(\omega)\in \cS\})- \Pr(\{\omega\in S'|X(\omega)\in \cS, Y(\omega)\in A\}) \\
    &=\Pr(X^{-1}(A))- \Pr(Y^{-1}(A)) \\
    &= \Pr(X^{-1}(A)\cap [X = Y]) + \Pr(X^{-1}(A)\cap [X \ne Y])  \\&\hskip .5in - \Pr(Y^{-1}(A)\cap [X = Y])- \Pr(Y^{-1}(A)\cap [X \ne Y]) \\
    &= \Pr(X^{-1}(A)\cap [X \ne Y]) - \Pr(Y^{-1}(A)\cap [X \ne Y]) \\
    &\le \Pr([X \ne Y]). \qedhere
\end{align*} 
\end{proof}

For a more extensive treatment of coupling, see \cite{den2012probability} or \cite[Chapter 4]{roch2015modern}.

\section{Randomized query algorithm to deterministic distinguisher}\label{EandToDet}
In this section, we establish the connection between randomized query algorithm for group testing estimation and deterministic distinguisher of the hard distributions constructed in \cref{sec:proof}.
\begin{lemma}
    If there is a randomized non-adaptive algorithm that asks $q$ queries and for every set of defective items $I$, with probability at least $2/3$, outputs an $\alpha$-estimation of $|I|$, then there are $q$ queries $\vec Q^{\ge \lambda}=(Q_1,\ldots,Q_q)$ and a deterministic distinguisher $\Psi$ between $\vec Q^{\ge}(\muEven)$ and $\vec Q^{\ge}(\muOdd)$. 

    In particular, (by Lemma~\ref{lem:TV-algo})
    $$\|\vec{Q}^{\geq\lambda}(\muEven) -\vec{Q}^{\geq\lambda}(\muOdd)\|_{\TV} \ge \frac{1}{3}. $$
\end{lemma}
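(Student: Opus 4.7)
\noindent The plan is a standard Yao-style derandomization followed by using the interleaved support structure to turn an estimator into a distinguisher.

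First, I would view the randomized non-adaptive algorithm $\mathcal{A}$ as a distribution over deterministic non-adaptive algorithms indexed by its internal randomness $r$. Each deterministic algorithm $\mathcal{A}_r$ is specified by a fixed query tuple $\vec{Q}_r = (Q_{r,1},\ldots,Q_{r,q})$ and a deterministic output function $f_r \colon \{0,1\}^q \to \mathbb{N}$ applied to the responses $\vec{Q}_r^{\geq\lambda}(I)$. By hypothesis, for every defect set $I$, $\Pr_r(\mathcal{A}_r$ outputs an $\alpha$-estimate of $|I|) \geq 2/3$. Averaging this inequality over $I$ drawn from the mixture $\tfrac12(\muEven + \muOdd)$ and swapping the order of expectations, there must exist a fixed randomness $r^*$ for which $\Pr_{I \sim \frac12(\muEven+\muOdd)}(\mathcal{A}_{r^*}$ succeeds$) \geq 2/3$. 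Set $\vec{Q}^{\geq\lambda} := \vec{Q}_{r^*}^{\geq\lambda}$ and $f := f_{r^*}$.

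Next I would build the distinguisher $\Psi \colon \{0,1\}^q \to \{0,1\}$ from $f$ by exploiting the crucial geometric property of $\SEven \cup \SOdd$: consecutive elements differ by a factor of $\beta = \lfloor\alpha\rfloor + 1 > \alpha$. Therefore the intervals $\{[s,\alpha s] : s \in \SEven \cup \SOdd\}$ are pairwise disjoint, so for any output $v \in \mathbb{N}$ there is at most one $s \in \SEven \cup \SOdd$ with $v \in [s,\alpha s]$. Define $\Psi(y) = 0$ if the unique such $s$ (if it exists) associated with $v = f(y)$ lies in $\SEven$, $\Psi(y) = 1$ if it lies in $\SOdd$, and $\Psi(y) = 0$ arbitrarily otherwise. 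Whenever $I \sim \muEven$ and $f(\vec{Q}^{\geq\lambda}(I))$ is an $\alpha$-estimate of $|I|$, the unique distinguished scale is exactly $|I| \in \SEven$, so $\Psi$ outputs $0$; symmetrically $\Psi$ outputs $1$ for successful samples from $\muOdd$. Hence
\[
\Pr_{b \sim \{0,1\},\, I \sim \mu_b}\bigl(\Psi(\vec{Q}^{\geq\lambda}(I)) = b\bigr) \;\geq\; \Pr_{b,I}(\mathcal{A}_{r^*} \text{ succeeds on } I) \;\geq\; \tfrac{2}{3}.
\]

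The lower bound on total variation distance then follows directly from Lemma~\ref{lem:TV-algo} applied to $\vec{Q}^{\geq\lambda}(\muEven)$ and $\vec{Q}^{\geq\lambda}(\muOdd)$ with the distinguisher $\Psi$. There is essentially no obstacle here beyond being careful with the derandomization: the only place requiring thought is ensuring that Yao's averaging gives a single choice of randomness under which both the queries and the output rule are fixed simultaneously, which is immediate since both are functions of the same internal random string. Every subsequent step is deterministic bookkeeping using the disjointness of the $\alpha$-intervals around the points in $\SEven \cup \SOdd$.
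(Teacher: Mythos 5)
Your proof is correct and follows essentially the same approach as the paper: a Yao-style averaging argument to fix the randomness (and hence both the query tuple and the output rule simultaneously), followed by using the fact that $\beta = \lfloor\alpha\rfloor + 1 > \alpha$ makes the intervals $[s,\alpha s]$ for $s \in \SEven \cup \SOdd$ pairwise disjoint, so a correct $\alpha$-estimate determines its scale and hence the parity bit. Your write-up is slightly more explicit than the paper's in constructing $\Psi$ and in handling the case where the output falls outside all intervals, but the underlying argument is the same.
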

\begin{proof}
    Let ${\cal A}(s,I)$ be a randomized non-adaptive algorithm that asks $q$ queries and for any set of defective items $I$, with probability at least $2/3$, outputs an $\alpha$-estimation of $|I|$, where $s$ is the random seed of the algorithm. 
    
    Consider the indicator random variable $X(s,I)$, which equals $1$ if the estimation is correct. For every $I$, we have $E_s[X(s,I)]\ge 2/3$. Consider the distribution $D$ of $I$ where, with probability $1/2$, $I$ is chosen according to $\muOdd$, and with probability $1/2$, it is chosen according to $\muEven$. Then $$E_s[E_{I\sim D}[X(s,I)]]=E_{I\sim D}[E_s[X(s,I)]]\ge 2/3$$ and therefore there exists an $s_0$ such that $E_{I\sim D}[X(s_0,I)]\ge 2/3$. In other words, there exist a set of $q$ queries $\vec Q^{\ge \lambda}=(Q_1,\ldots,Q_q)$ such that, for random $I$ chosen according to $D$, with probability at least $2/3$, the deterministic algorithm ${\cal A}(s_0,I)$ estimates correctly $|I|$. The same algorithm can distinguish between $\vec Q^{\ge}(\muEven)$ and $\vec Q^{\ge}(\muOdd)$ because, by the choice of $\beta$, there is exactly one $s\in\SOdd\cup\SEven$ which the estimation algorithm output is within the range of $[s,\alpha s]$.
\end{proof}
\end{document}